\newcommand\grp[2]{#1^{<#2>}}
\newcommand\subproj{\,\triangleleft\,}
\newcommand\simu{\leq}
\newcommand\bloc[1]{B_{\vec{#1}}}
\newcommand\Z{\mathbb{Z}}
\newcommand\N{\mathbb{N}}
\newcommand\restr[2]{{
  \left.\kern-\nulldelimiterspace 
  #1 
  \vphantom{\big|} 
  \right|_{#2} 
  }}
\begin{document}

\title*{The Mirage of Universality in Cellular Automata}
\author{Guillaume Theyssier}
\institute{Guillaume Theyssier \at CNRS, Universit\'e Aix-Marseille, \email{guillaume.theyssier@cnrs.fr}}

%
%
\maketitle

\newcommand\myabstract{This note is a survey of examples and results about cellular automata with the purpose of recalling that there is no 'universal' way of being computationally universal. In particular, we show how some cellular automata can embed efficient but bounded computation, while others can embed unbounded computations but not efficiently. We also study two variants of Boolean circuit embedding, transient versus repeatable simulations, and underline their differences. Finally we show how strong forms of universality can be hidden inside some seemingly simple cellular automata according to some classical dynamical parameters.}
\abstract*{\myabstract}

\abstract{\myabstract}

\section{Eric, the collector}

 The present note responds to an invitation to contribute to a book at the occasion of Eric Goles 70th birthday. Before diving into the scientific content, I should say a few words about Eric and the motivation behind this note.

Anyone knowing Eric certainly had the pleasure to listen to some of his colorful anecdotes (I certainly did). He owns a large collection, large enough to adapt to a wide variety of listeners and circumstances. The collection is in fact twice as large, because each anecdote, usually told to an international audience, is doubled with a more confidential Chilean version full of slang words. Eric's pleasure of telling stories is obvious, he has generously shared his collection, but nobody has listened to the same sequence of anecdotes and we all end up with a different global picture, much like the adventurous readers of the antinovel of Cort\'azar.

The collection of models and systems studied by Eric in its numerous scientific publications is equally striking. It abounds in small examples that are carefully analyzed and shown to capture important phenomena. It connects different points of view and different communities of researchers. It seems to never end up in the exact same theoretical framework and invites us to think about details that make a difference. In short, there is an anti-Bourbakist quality to it. 

At the heart of this scientific collection (at least from what I can tell from my collaboration with Eric), there is the question of the computational universality of small dynamical systems, and how it manifests itself in the complexity of various associated decision problems.
Computational universality of dynamical systems is a topic that might seem boring to the classical computer scientist (after all Turing showed the existence of a universal machine in the 1930s) and not serious for the dynamical systems community (this is not real maths\footnote{It should be noted however that a growing trend in symbolic dynamics has shown the importance of computability considerations. Some of these results were even published in real math journals... }). Part of the problem is that this kind of research is endangered by what I would call the \emph{mirage of universality}: the illusion that there must be universal consequences to the fact of being ``computationally universal'' independently of the precise definition used, and that such a statement, even given without technical details, gives information by itself. Pursuing this mirage, one is tempted to put forward vague theorem statements and hide the concrete mathematical result in the proofs (or sketch of). To make an analogy, no paper in computational complexity would use theorem statements like ``Problem X is hard'' and then, hidden in the proof details, unveil the definition of ``hard''. On the contrary, computational complexity theory has been extremely fruitful by putting forward a vast ``zoo'' of precisely defined complexity classes, often with a corresponding notion of reduction.

Of course, there is a lot to say and a lot has already been said about the mathematical formalization of computational universality in dynamical systems, but my intention here is clearly not to start a comprehensive survey on the topic \cite{DelvenneKB06,OllingerJAC}. Instead, I would like to invite the reader to a quick tour of examples and properties that break this mirage of universality. Most of them were encountered or established during my collaboration with Eric, and I hope this note can give a clue about the richness of Eric's scientific collection.\\

\textbf{Content of the note: } To simplify exposition, I chose to restrict to (classical) cellular automata and tackle three main topics in three separate sections. Each topic shows examples of ``computationally universal'' cellular automata that, in some sense, do not behave as expected, or pair of examples that behave differently with respect to some parameter:

\begin{itemize}
\item efficient versus unbounded computations: how some cellular automata are able to embed one type of computations but not the other;
\item transient versus repeatable circuit simulations: about the existence of (at least) two fundamentally different ways to simulate Boolean circuits in cellular automata, and their consequences;
\item hidden universality: how cellular automata might seem 'simple' according to some parameter despite being actually universal.
\end{itemize}
Before starting, some standard definition are given below to set up our framework.

\section{Standard definitions and notations}

For any finite set $Q$ (the alphabet) and positive integer $d$ (the dimension), we consider the space of configurations ${Q^{\Z^d}}$, \textit{i.e.} the set of maps giving a state from $Q$ to each position in the lattice $\Z^d$. The state of configuration ${c\in Q^{\Z^d}}$ at position ${z\in\Z^d}$ will be denoted either ${c(z)}$ or ${c_z}$.
\newcommand\zd{{\Z^d}}

\newcommand\ball[1]{\mathcal{B}({#1})}
For ${n\in\N}$, let $\ball{n}$ be the set of positions of ${\Z^d}$ of maximum norm at most $n$: 
\[\ball{n} = \{z\in\Z^d : \|z\|_\infty\leq n\}.\]
Then for any ${u\in Q^{\ball{n}}}$, we define the \emph{cylinder set} $[u]$ centered on cell 0 by:
\[[u] = \{c\in Q^{\Z^d} : \forall z\in\ball{n}, c_z=u_z\}.\]

These cylinder sets can be chosen as a base of open sets of the space $Q^\zd$ endowing it with a compact topology \cite{kurkabook}. Equivalently, the same topology can be defined by the Cantor distance: 
\[\delta(c,c') = 2^{-\min \{\|z\|_\infty : c_z\neq c'_z\}}.\]

A cellular automaton of dimension $d$ and state set $Q$ is a map $F$
acting continuously on configurations and translation invariant way.
Equivalently (Curtis-Lyndon-Heldund theorem \cite{hedlund}), it can be
defined locally by a neighborhood $V$ (a finite subset of $\zd$) and a
local transition map ${f:Q^V\rightarrow Q}$ as follows:
\[\forall z\in\zd,\quad F(c)_z = f\bigl(\restr{c}{z+V}\bigr)\]
where ${\restr{c}{z+V}}$ denotes the map ${z'\in V\mapsto c_{z+z'}}$.

The \emph{radius} of $F$ is the smallest integer $r$ such that ${V\subseteq\ball{r}}$ where $V$ is some neighborhood for which there is a local map $f_V:Q^V\rightarrow Q$ defining $F$ as above. $F$ induces an action on finite patterns as follows. For any ${n\in\N}$ and any ${u\in Q^{\ball{n+r}}}$, ${F(u)}$ is the finite pattern ${v\in Q^{\ball{n}}}$ obtained by application of $f$ on $u$ at each position from $\ball{n}$, \emph{i.e.\/} such that 
\[\forall c\in[u], F(c)\in[v].\]

We are now going to define a notion of universality for cellular automata. We choose this one for two reasons: first it is one of the strongest form of universality and will serve us as a benchmark in the following, and second, it is intrinsic to the model of cellular automata and make no reference to other models of computation (for more details, see \cite{OllingerUnivhistory,bulk1,Delorme2011b}).

This notion, called \emph{intrinsic universality}, is based on a notion of (intrinsic) simulation that is defined through two ingredients \cite{Delorme2011b,bulk1}.

The first ingredient is a notion of cell-wise simulation that works by restriction to a sub alphabet and then projection onto the target alphabet.
To be more precise let $F$ and $G$ be cellular automata of dimension $d$. 
We denote by ${F\subproj G}$ the fact that $F$
is obtained from $G$ by cell-wise restriction and projection,
formally: ${\exists \pi : Q\subseteq Q_G\rightarrow Q_F}$ surjective such
that for all ${c\in Q^{\zd}}$ 
\[\overline{\pi}\circ G(c) = F \circ \overline{\pi}(c)\]
where ${\overline{\pi} : Q^\zd\rightarrow Q_F^\zd}$ is the
cell-wise application of $\pi$. In the language of dynamical systems, ${(F,Q_F^{\zd})}$ is a factor of ${(G,Q^{\zd})}$ which is a sub-system of ${(G,Q_G^{\zd})}$.

Now we add the second ingredient, \emph{rescaling}, that allows to turn a cell-wise simulation into a simulation that works by blocks: blocks of cell of the first CA are simulated by blocks of cell of the second CA. Given a rectangular shape ${\vec{m}=(m_1,\ldots,m_d)}$ and some
alphabet $Q$ we define the bloc recoding map $\bloc{m}$ from $Q^\zd$ to ${\bigl(Q^{m_1m_2\cdots m_d}\bigr)^\zd}$ by:
\begin{align*}
  \bloc{m}(x)(z_1,\ldots,z_d) = \bigl(&x(m_1z_1,\ldots,m_dz_d),\ldots,\\
                              &x(m_1z_1+m_1-1,m_2z_2,\ldots,m_dz_d),\\
                              &x(m_1z_1,m_2z_2+1,m_3z_3,\ldots,m_dz_d),\ldots\\
                              &x(m_1z_1+m_1-1,\ldots,m_dz_d+m_d-1)\bigr)
\end{align*}
It is a bijection that recodes any configurations by blocks of shape $\vec{m}$.
Now if $t$ is a positive integer and $\vec{z}\in\zd$, we
define the rescaling of $F$ of parameters $\vec{m}$ and $t$
as the CA ${\grp{F}{\vec{m},t} = \bloc{m} \circ
F^t \circ \bloc{m}^{-1}}$.

We finally say that $G$ \emph{simulates} $F$, denoted ${F\simu G}$, if there
are parameters $\vec{m}$,$t$,$\vec{m'}$ and $t'$ 
such that
${\grp{F}{\vec{m},t}\subproj\grp{G}{\vec{m'},t'}}$. We also say that $G$ \emph{strongly simulates} $F$ if there are parameters $\vec{m}$ and $t$ such that 
${F\subproj\grp{G}{\vec{m},t}}$. 
Then, a CA $G$ is \emph{intrinsically universal} if for any CA $F$ we have
${F\simu G}$. It can be shown that an intrinsically universal CA can in fact \emph{strongly simulate} any CA \cite{bulk1}.

\newcommand\PTIME{\mathrm{P}}
\newcommand\PSPACE{\mathrm{PSPACE}}
\newcommand\NL{\mathrm{NLOGSPACE}}
\newcommand\LOG{\mathrm{LOGSPACE}}

Finally, we assume the reader is familiar with basic notions and results of computability and complexity theory. We will use the following standard classes of decision problems:
\begin{itemize}
\item $\PTIME$ is the set of problems which can be solve be a deterministic Turing machine in polynomial time;
\item $\NL$ is the set of problems which can be solve be a non-deterministic Turing machine in logarithmic space;
\item $\Sigma_1^0$ is the set of recursively enumerable problems (which contains the halting problem).
\end{itemize}

Without explicit mention and when speaking about $\PTIME$-completeness we consider LOGSPACE reductions. When speaking about $\Sigma_1^0$-completeness we usually consider many-one reductions.

\section{Efficient vs. unbounded computations}

It is well-known that, besides the reference model of Turing machines, there are other ones that fundamentally differ because they either only allow efficient but bounded computation (like Boolean circuits) or unbounded but slow computations (like Minsky machines) \cite{Minsky}. We would like to illustrate this aspect in the framework of cellular automata in a precise manner. To simplify, we restrict to dimension 1 in this section. We first define two classical problems associated to any CA which will serve as canonical indicators for both aspects mentioned above: efficiency and unboundedness of computations.

\newcommand\logred{\leq_{log}}
\newcommand\PRED[1]{\mathrm{PRED}_{#1}}
\newcommand\LIMIT[1]{\mathrm{LIMIT}_{#1}}
\newcommand\TIME[1]{\mathrm{TIME}_{#1}}
\newcommand\CYREACH[1]{\mathrm{UBPRED}_{#1}}

The first one is about short-term predictability within a bounded time range and provides a fine-grained complexity measurement within class $\PTIME$.

\begin{definition}
  Let $F$ be any CA of radius $r$ and alphabet $Q$.
  The prediction problem $\PRED{F}$ is defined as follows:
  \begin{itemize}
  \item input: ${t>0}$ and ${u\in Q^{\ball{rt}}}$
  \item output: ${F^t(u)\in Q}$.
  \end{itemize}
\end{definition}

The second one asks for a prediction about an unbounded future and provides a coarse-grained complexity measure allowed to cross the decidable barrier. It could be refined in many ways as in the definition of universality for dynamical symbolic systems from \cite{DelvenneKB06}. We prefer to keep it simple for the clarity of exposition. We say a configuration $c\in Q^\Z$ is eventually bi-periodic if it is eventually periodic to the left and eventually periodic to the right, said differently if it is of the form ${{}^\infty u_L\cdot u\cdot u_R^\infty}$ where $u_L$, $u$ and $u_R$ are finite words.

\begin{definition}
  Let $F$ be any CA of radius $r$ and dimension $1$.
  The reachability problem $\CYREACH{F}$ is defined as follows:
  \begin{itemize}
  \item input: an eventually bi-periodic configuration $c={}^\infty u_L\cdot u\cdot u_R^\infty$ and a state $q$.
  \item output: decide whether there is ${t\in\N}$ such that ${F^t(c)_0 = q}$.
  \end{itemize}
\end{definition}

One of the well-know results of computational universality in cellular automata is about elementary rule 110 given by the local rule 
${\delta : \{0,1\}^3\rightarrow \{0,1\}}$ with 
\[\delta(x,y,z) = (1-xyz)\cdot\max(y,z).\]

It is interesting to note that the first proof of computational universality of this cellular automaton due to M. Cook \cite{cook110} was enough to prove undecidability of $\CYREACH{\delta}$ but did not give information about problem $\PRED{\delta}$. It is only later, by a strong improvement in one step the the reduction, that $\PRED{\delta}$ was proven to be $\PTIME$-complete \cite{woodsneary06}. The purpose of this section is precisely to make clear that there is generally no implication in either direction between the $\PTIME$-hardness of $\PRED{}$ and the undecidability of $\CYREACH{}$. 

\begin{definition}
\label{def:freezing}
A CA $F$ is a \emph{freezing CA} if, for some (partial) order $\leq$ on states, the state of any cell can only decrease, \textit{i.e.\/}
  \[F(c)_z\leq c_z\]
  for any configuration $c$ and any cell $z$.
\end{definition}

The definition above was introduced in \cite{GolOlThey15} in studied more in depth in \cite{corr/fbccca}. Similar cellular automata corresponding to bounded changes or bounded communications were also considered in the literature with the point of view language recognizers \cite{vollmar81,KutribM10a,CartonGR18}. Under the hypothesis that $\NL\neq \PTIME$, the following results show examples of cellular automata that can embed arbitrary unbounded computation, but not in an efficient way.

\begin{theorem}[Section 4.3 of \cite{corr/fbccca}]
  For any freezing CA $F$ of dimension 1, the problem ${\PRED{F}}$ is in $\NL$. There exists a 1D freezing CA $F$ such that ${\CYREACH{F}}$ is ${\Sigma_1^0}$-complete.
\end{theorem}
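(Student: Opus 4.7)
\textbf{Plan for $\PRED{F}\in\NL$.} The crucial consequence of freezing is that each cell changes state at most $K:=\mathrm{height}(Q,\leq)\leq|Q|$ times, since its sequence of states is monotone non-increasing in $(Q,\leq)$ and every actual change is strict. Therefore the total number of ``change events'' $(z,s,q)$ occurring in the light cone of $(0,t)$, \emph{i.e.\/} at cells and times $|z|\le r(t-s)$, is at most $K(2rt+1)$, which is polynomial in the input. The plan is to reformulate $\PRED{F}$ as reachability in a polynomial-size directed graph whose vertices are candidate change events $(z,s,q)$, plus source vertices encoded from the initial data $u$ and a target vertex for the requested output state at $(0,t)$. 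Edges enforce the local rule $f$: an event at $(z,s)$ points back to a $(2r+1)$-tuple of candidate predecessor events at the neighbors $z+V$ at an earlier time, with freezing providing on each neighbor a canonical ``most recent'' predecessor so that in-degree stays polynomial. Since directed reachability in a poly-size graph is in $\NL$ and each edge can be certified with a constant number of look-ups of $f$ and a constant amount of additional guessing, this yields $\PRED{F}\in\NL$.

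\textbf{Plan for $\Sigma_1^0$-completeness of $\CYREACH{F}$.} I would build $F$ so as to simulate an arbitrary Minsky machine, whose halting problem is $\Sigma_1^0$-complete; membership in $\Sigma_1^0$ is then immediate by direct CA simulation. The key design principle is to exploit the infinite space available in an eventually bi-periodic configuration while keeping each individual cell involved in only \emph{constantly many} updates, as freezing demands. The alphabet is tiered as $\mathtt{free}>\mathtt{active}>\mathtt{trailing}>\mathtt{consumed}$, each tier internally subdivided into short chains of intermediate levels along which signals can descend. The initial configuration is eventually bi-periodic: cell $0$ holds a ``launcher'' state encoding the Minsky program and initial counters, and the rest is a $\mathtt{free}$ sea on both sides. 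The simulation then grows a ``computation tail'' into the $\mathtt{free}$ sea to one side: each successive Minsky configuration is represented in a fresh, increasingly long region of cells, built by strict descents $\mathtt{free}\to\mathtt{active}$, executed by signals travelling through internal sub-levels of the $\mathtt{active}$ tier, and then abandoned in the $\mathtt{trailing}$ tier before the next Minsky step is simulated even further outward. When and only when the Minsky machine reaches its halt state, a leftward ``halt-propagation'' signal travels back along the $\mathtt{trailing}$ cells, each descending $\mathtt{trailing}\to\mathtt{consumed}$, and eventually turns cell $0$ into a distinguished state $q_{\mathtt{halt}}$; this gives the hardness reduction to $\CYREACH{F}$.

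\textbf{Where the difficulty lies.} For the first part, the delicate ingredient is keeping the reachability graph genuinely polynomial: freezing is what forbids more than $K$ events per cell and what makes the ``most recent predecessor'' on each neighbor well-defined, thereby avoiding exponential branching during verification. For the construction, the real challenge is to guarantee that freezing is preserved while the simulated computation is unbounded: the geometric trick of always hosting the next Minsky step on fresh cells further out ensures that each cell is crossed by only a bounded number of signals (activation, execution of one Minsky step, trailing, and at most one halt-propagation pass), and these signals can be encoded as descents along a fixed finite chain of intermediate levels baked into the state set. Verifying that one-way growth synchronises correctly with counter updates, zero-tests and the eventual halt-propagation back to cell $0$ is the part requiring the most bookkeeping, but no new ideas beyond the tiered order above.
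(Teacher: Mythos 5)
The paper does not reprove this theorem; it cites Section~4.3 of \cite{corr/fbccca}, where the $\NL$ upper bound is obtained by a guess-and-verify sweep: since freezing bounds the number of state changes of each cell by a constant, the whole time-$0..t$ history of a column is summarized by $O(\log t)$ bits, and a nondeterministic logspace machine scans the light cone left to right, guessing these summaries and checking local consistency of a sliding window of $2r+1$ columns. Your reduction of $\PRED{F}$ to plain $s$--$t$ reachability on a graph of change events does not work as stated, for two reasons. First, the dependency of an event on its neighbours is \emph{conjunctive}: to certify that cell $0$ has a given state at time $t$ you must certify the histories of all $\Theta(rt)$ cells in the cone, i.e.\ a whole sub-DAG with AND-nodes, not a single path; following one path in $\NL$ certifies nothing, and an $\NL$ machine cannot remember which branches of the conjunction it has already verified (that kind of AND-reachability is $\PTIME$-complete in general). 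Second, the value at $(0,t)$ also depends on \emph{negative} information (no further change occurs after the last claimed event, and no unclaimed change is forced anywhere in the cone), which an event-reachability graph does not express at all. The column-summary sweep handles both issues, because each guessed summary asserts the full history of its cell, including the absence of changes, and is checked exactly once as the window passes over it; the bounded-change property is what makes the summary logarithmic, not what makes in-degrees small.

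For the second part your guiding principle is the right one and matches the spirit of the cited construction (unbounded computation must flee into fresh cells, each cell suffering $O(1)$ changes, with one returning halt signal making the reduction to $\CYREACH{F}$, and $\Sigma_1^0$ membership being immediate). But the step you dismiss as bookkeeping is precisely where the difficulty lies: how does the content of region $k$ (the counter values) reach the fresh region $k+1$? If each cell of region $k$ emits its own signal, or if a control head shuttles back and forth between counters, then cells lying between source and destination are crossed an unbounded number of times as the counters grow, which destroys freezing. The known way out is to avoid copying altogether: encode each counter as the distance between a constant number of signals that all drift in one direction (increments, decrements and zero-tests become finitely many speed changes and collision checks, and lengths can be transferred by pairs of signals of distinct speeds), so that any fixed cell is crossed by only a constant number of fronts plus the single leftward halt signal. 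This distance-based (or speed-ratio) encoding is the essential idea your tiered order does not supply, so as written the claimed invariant ``each cell is crossed by a bounded number of signals'' is not justified by the construction you describe.
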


We are now going to build an example with the opposite computation embedding properties: as hard as it can be in the short term (it can embed efficiently bounded computations), but decidable in the long term (it can not embed unbounded computations). It is inspired from \cite[Example 7]{corr/fbccca} and consists in a simulation of some $\PTIME$-complete cellular automaton inside finite zones, with some head controlling the simulation and forced to move back and forth inside the zone and shrink it by one cell at each bounce on a boundary. The simulation is such that one step of the simulated cellular automaton is done at each pass so that there is only a quadratic slowdown (see Figure~\ref{fig:zigzag}).

\newcommand\convzz[1]{\mathcal{Z}_{#1}}

\newcommand\stateZ[2]{\draw[fill=white] (#1,#2) rectangle +(1,1);}
\newcommand\stateO[2]{\draw[fill=white!50!blue] (#1,#2) rectangle +(1,1);}
\newcommand\stateZO[2]{\draw[fill=green] (#1,#2) rectangle +(1,1);}
\newcommand\stateOO[2]{\draw[fill=gray] (#1,#2) rectangle +(1,1);}
\newcommand\stateZZO[2]{\draw (#1,#2) rectangle +(1,1); \draw (#1,#2)+(.5,.5) node {\tiny $<$};}
\newcommand\stateOZO[2]{\draw (#1,#2) rectangle +(1,1); \draw (#1,#2)+(.5,.5) node {\tiny $>$};}
\newcommand\stateZOO[2]{\draw[fill=white!50!red] (#1,#2) rectangle +(1,1);}
\newcommand\stateOOO[2]{\draw[fill=white] (#1,#2) rectangle +(1,1);}

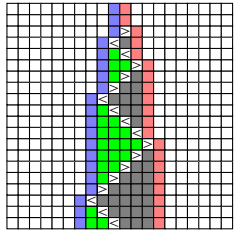
\begin{figure}
  \centering
        \begin{tikzpicture}[baseline=(current bounding box).center,scale=.15]
        \stateZ{0}{0}\stateZ{0}{1}\stateZ{0}{2}\stateZ{0}{3}\stateZ{0}{4}\stateZ{0}{5}\stateZ{0}{6}\stateZ{0}{7}\stateZ{0}{8}\stateZ{0}{9}\stateZ{0}{10}\stateZ{0}{11}\stateZ{0}{12}\stateZ{0}{13}\stateZ{0}{14}\stateZ{0}{15}\stateZ{0}{16}\stateZ{0}{17}\stateZ{0}{18}\stateZ{0}{19}
\stateZ{1}{0}\stateZ{1}{1}\stateZ{1}{2}\stateZ{1}{3}\stateZ{1}{4}\stateZ{1}{5}\stateZ{1}{6}\stateZ{1}{7}\stateZ{1}{8}\stateZ{1}{9}\stateZ{1}{10}\stateZ{1}{11}\stateZ{1}{12}\stateZ{1}{13}\stateZ{1}{14}\stateZ{1}{15}\stateZ{1}{16}\stateZ{1}{17}\stateZ{1}{18}\stateZ{1}{19}
\stateZ{2}{0}\stateZ{2}{1}\stateZ{2}{2}\stateZ{2}{3}\stateZ{2}{4}\stateZ{2}{5}\stateZ{2}{6}\stateZ{2}{7}\stateZ{2}{8}\stateZ{2}{9}\stateZ{2}{10}\stateZ{2}{11}\stateZ{2}{12}\stateZ{2}{13}\stateZ{2}{14}\stateZ{2}{15}\stateZ{2}{16}\stateZ{2}{17}\stateZ{2}{18}\stateZ{2}{19}
\stateZ{3}{0}\stateZ{3}{1}\stateZ{3}{2}\stateZ{3}{3}\stateZ{3}{4}\stateZ{3}{5}\stateZ{3}{6}\stateZ{3}{7}\stateZ{3}{8}\stateZ{3}{9}\stateZ{3}{10}\stateZ{3}{11}\stateZ{3}{12}\stateZ{3}{13}\stateZ{3}{14}\stateZ{3}{15}\stateZ{3}{16}\stateZ{3}{17}\stateZ{3}{18}\stateZ{3}{19}
\stateZ{4}{0}\stateZ{4}{1}\stateZ{4}{2}\stateZ{4}{3}\stateZ{4}{4}\stateZ{4}{5}\stateZ{4}{6}\stateZ{4}{7}\stateZ{4}{8}\stateZ{4}{9}\stateZ{4}{10}\stateZ{4}{11}\stateZ{4}{12}\stateZ{4}{13}\stateZ{4}{14}\stateZ{4}{15}\stateZ{4}{16}\stateZ{4}{17}\stateZ{4}{18}\stateZ{4}{19}
\stateZ{5}{0}\stateZ{5}{1}\stateZ{5}{2}\stateZ{5}{3}\stateZ{5}{4}\stateZ{5}{5}\stateZ{5}{6}\stateZ{5}{7}\stateZ{5}{8}\stateZ{5}{9}\stateZ{5}{10}\stateZ{5}{11}\stateZ{5}{12}\stateZ{5}{13}\stateZ{5}{14}\stateZ{5}{15}\stateZ{5}{16}\stateZ{5}{17}\stateZ{5}{18}\stateZ{5}{19}
\stateO{6}{0}\stateO{6}{1}\stateO{6}{2}\stateZ{6}{3}\stateZ{6}{4}\stateZ{6}{5}\stateZ{6}{6}\stateZ{6}{7}\stateZ{6}{8}\stateZ{6}{9}\stateZ{6}{10}\stateZ{6}{11}\stateZ{6}{12}\stateZ{6}{13}\stateZ{6}{14}\stateZ{6}{15}\stateZ{6}{16}\stateZ{6}{17}\stateZ{6}{18}\stateZ{6}{19}
\stateZO{7}{0}\stateZO{7}{1}\stateZZO{7}{2}\stateO{7}{3}\stateO{7}{4}\stateO{7}{5}\stateO{7}{6}\stateO{7}{7}\stateO{7}{8}\stateO{7}{9}\stateO{7}{10}\stateO{7}{11}\stateZ{7}{12}\stateZ{7}{13}\stateZ{7}{14}\stateZ{7}{15}\stateZ{7}{16}\stateZ{7}{17}\stateZ{7}{18}\stateZ{7}{19}
\stateZO{8}{0}\stateZZO{8}{1}\stateOO{8}{2}\stateOZO{8}{3}\stateZO{8}{4}\stateZO{8}{5}\stateZO{8}{6}\stateZO{8}{7}\stateZO{8}{8}\stateZO{8}{9}\stateZO{8}{10}\stateZZO{8}{11}\stateO{8}{12}\stateO{8}{13}\stateO{8}{14}\stateO{8}{15}\stateO{8}{16}\stateZ{8}{17}\stateZ{8}{18}\stateZ{8}{19}
\stateZZO{9}{0}\stateOO{9}{1}\stateOO{9}{2}\stateOO{9}{3}\stateOZO{9}{4}\stateZO{9}{5}\stateZO{9}{6}\stateZO{9}{7}\stateZO{9}{8}\stateZO{9}{9}\stateZZO{9}{10}\stateOO{9}{11}\stateOZO{9}{12}\stateZO{9}{13}\stateZO{9}{14}\stateZO{9}{15}\stateZZO{9}{16}\stateO{9}{17}\stateO{9}{18}\stateO{9}{19}
\stateOO{10}{0}\stateOO{10}{1}\stateOO{10}{2}\stateOO{10}{3}\stateOO{10}{4}\stateOZO{10}{5}\stateZO{10}{6}\stateZO{10}{7}\stateZO{10}{8}\stateZZO{10}{9}\stateOO{10}{10}\stateOO{10}{11}\stateOO{10}{12}\stateOZO{10}{13}\stateZO{10}{14}\stateZZO{10}{15}\stateOO{10}{16}\stateOZO{10}{17}\stateZOO{10}{18}\stateZOO{10}{19}
\stateOO{11}{0}\stateOO{11}{1}\stateOO{11}{2}\stateOO{11}{3}\stateOO{11}{4}\stateOO{11}{5}\stateOZO{11}{6}\stateZO{11}{7}\stateZZO{11}{8}\stateOO{11}{9}\stateOO{11}{10}\stateOO{11}{11}\stateOO{11}{12}\stateOO{11}{13}\stateOZO{11}{14}\stateZOO{11}{15}\stateZOO{11}{16}\stateZOO{11}{17}\stateOOO{11}{18}\stateOOO{11}{19}
\stateOO{12}{0}\stateOO{12}{1}\stateOO{12}{2}\stateOO{12}{3}\stateOO{12}{4}\stateOO{12}{5}\stateOO{12}{6}\stateOZO{12}{7}\stateZOO{12}{8}\stateZOO{12}{9}\stateZOO{12}{10}\stateZOO{12}{11}\stateZOO{12}{12}\stateZOO{12}{13}\stateZOO{12}{14}\stateOOO{12}{15}\stateOOO{12}{16}\stateOOO{12}{17}\stateOOO{12}{18}\stateOOO{12}{19}
\stateZOO{13}{0}\stateZOO{13}{1}\stateZOO{13}{2}\stateZOO{13}{3}\stateZOO{13}{4}\stateZOO{13}{5}\stateZOO{13}{6}\stateZOO{13}{7}\stateOOO{13}{8}\stateOOO{13}{9}\stateOOO{13}{10}\stateOOO{13}{11}\stateOOO{13}{12}\stateOOO{13}{13}\stateOOO{13}{14}\stateOOO{13}{15}\stateOOO{13}{16}\stateOOO{13}{17}\stateOOO{13}{18}\stateOOO{13}{19}
\stateOOO{14}{0}\stateOOO{14}{1}\stateOOO{14}{2}\stateOOO{14}{3}\stateOOO{14}{4}\stateOOO{14}{5}\stateOOO{14}{6}\stateOOO{14}{7}\stateOOO{14}{8}\stateOOO{14}{9}\stateOOO{14}{10}\stateOOO{14}{11}\stateOOO{14}{12}\stateOOO{14}{13}\stateOOO{14}{14}\stateOOO{14}{15}\stateOOO{14}{16}\stateOOO{14}{17}\stateOOO{14}{18}\stateOOO{14}{19}
\stateOOO{15}{0}\stateOOO{15}{1}\stateOOO{15}{2}\stateOOO{15}{3}\stateOOO{15}{4}\stateOOO{15}{5}\stateOOO{15}{6}\stateOOO{15}{7}\stateOOO{15}{8}\stateOOO{15}{9}\stateOOO{15}{10}\stateOOO{15}{11}\stateOOO{15}{12}\stateOOO{15}{13}\stateOOO{15}{14}\stateOOO{15}{15}\stateOOO{15}{16}\stateOOO{15}{17}\stateOOO{15}{18}\stateOOO{15}{19}
\stateOOO{16}{0}\stateOOO{16}{1}\stateOOO{16}{2}\stateOOO{16}{3}\stateOOO{16}{4}\stateOOO{16}{5}\stateOOO{16}{6}\stateOOO{16}{7}\stateOOO{16}{8}\stateOOO{16}{9}\stateOOO{16}{10}\stateOOO{16}{11}\stateOOO{16}{12}\stateOOO{16}{13}\stateOOO{16}{14}\stateOOO{16}{15}\stateOOO{16}{16}\stateOOO{16}{17}\stateOOO{16}{18}\stateOOO{16}{19}
\stateOOO{17}{0}\stateOOO{17}{1}\stateOOO{17}{2}\stateOOO{17}{3}\stateOOO{17}{4}\stateOOO{17}{5}\stateOOO{17}{6}\stateOOO{17}{7}\stateOOO{17}{8}\stateOOO{17}{9}\stateOOO{17}{10}\stateOOO{17}{11}\stateOOO{17}{12}\stateOOO{17}{13}\stateOOO{17}{14}\stateOOO{17}{15}\stateOOO{17}{16}\stateOOO{17}{17}\stateOOO{17}{18}\stateOOO{17}{19}
\stateOOO{18}{0}\stateOOO{18}{1}\stateOOO{18}{2}\stateOOO{18}{3}\stateOOO{18}{4}\stateOOO{18}{5}\stateOOO{18}{6}\stateOOO{18}{7}\stateOOO{18}{8}\stateOOO{18}{9}\stateOOO{18}{10}\stateOOO{18}{11}\stateOOO{18}{12}\stateOOO{18}{13}\stateOOO{18}{14}\stateOOO{18}{15}\stateOOO{18}{16}\stateOOO{18}{17}\stateOOO{18}{18}\stateOOO{18}{19}
\stateOOO{19}{0}\stateOOO{19}{1}\stateOOO{19}{2}\stateOOO{19}{3}\stateOOO{19}{4}\stateOOO{19}{5}\stateOOO{19}{6}\stateOOO{19}{7}\stateOOO{19}{8}\stateOOO{19}{9}\stateOOO{19}{10}\stateOOO{19}{11}\stateOOO{19}{12}\stateOOO{19}{13}\stateOOO{19}{14}\stateOOO{19}{15}\stateOOO{19}{16}\stateOOO{19}{17}\stateOOO{19}{18}\stateOOO{19}{19}
      \end{tikzpicture}

  \caption{The shrinking zone trick behind the construction of $\convzz{F}$ (time goes from bottom to top).}
  \label{fig:zigzag}
\end{figure}

Let $F$ be any 1D CA on alphabet $Q$ with radius $1$ and local map ${\delta : Q^3\rightarrow Q}$. We define ${\convzz{F}}$ on alphabet ${R = \{b,b_+,e\}\cup Q'}$ with ${Q'=Q\times Q\times\{\leftarrow,\rightarrow,l,r\}}$ and radius $1$ as follows:
\begin{itemize}
\item $e$, the \emph{error state}, is a spreading state: any cell with $e$ in its neighborhood turns into state $e$; a configuration $c$ is \emph{valid} if $e$ never appears in its orbit;
\item  $b$, the blank state, never changes except in presence of the error state; $b_+$ becomes $b$ except in presence of the error state; a maximal connect component of cells in state $Q'$ is a \emph{working zone};
\item in a working zone, patterns of the form ${(x,y,r)(x',y',l)}$, or ${(x',y',l)(x,y,r)}$, or ${(x,y,z)(x',y',z')}$ with ${\{z,z'\}\subseteq\{\leftarrow,\rightarrow\}}$, or ${(x,y,r)(x',y',z)}$ or ${(x,y,z)(x',y',l)}$ with ${z\in \{\leftarrow,\rightarrow\}}$, are forbidden and generate an $e$ state when detected; therefore in a valid configuration and in each working zone there is at most one occurrence of a state of the form ${(x,y,\{\leftarrow,\rightarrow\})}$ called the \emph{head};
\item a cell without forbidden pattern (from previous item) and without head in its neighborhood doesn't change its state;
\item the movements and actions of the heads are as follows:
  \begin{itemize}
  \item inside a working zone, the head in state $\leftarrow$ moves left, the head in state $\rightarrow$ moves right; the local map $\delta$ is only applied the head moves left to right; precisely we have the following transitions:
    \begin{align*}
      (x,y,l) ,\ (x',y',\leftarrow) ,\ (x'',y'',r) &\mapsto (x',y',r)\\
      (x,y,l) ,\ (x',y',l) ,\ (x'',y'',\leftarrow) &\mapsto (x',y',\leftarrow)\\
      (x,y,l) ,\ (x',y',\rightarrow) ,\ (x'',y'',r) &\mapsto (x',y',l)\\
      (x,y,\rightarrow) ,\ (x',y',r) ,\ (x'',y'',r) &\mapsto (\delta(y,x',x''),x',\rightarrow)\\
    \end{align*}
  \item when a boundary of the working zone is reached, the head bounces, changes of direction and the working zone get shrinked by one cell; precisely we have the following transitions:
    \begin{align*}
      b,\ (x,y,l),\ (x',y',\leftarrow) &\mapsto (x,y,\leftarrow)\\
      b,\ (x,y,\leftarrow),\ (x',y',r) &\mapsto (x,y,\rightarrow)\\
      b,\ (x,y,\rightarrow),\ (x',y',r) &\mapsto (y,x,l)\\
      b,\ (x,y,l),\ (x',y',\rightarrow) &\mapsto b_+\\
      (x,y,\rightarrow),\ (x',y',r),\ b &\mapsto (x',y',\rightarrow)\\
      (x,y,l),\ (x',y',\rightarrow),\ b &\mapsto (x',y',\leftarrow)\\
      (x,y,l),\ (x',y',\leftarrow),\ b &\mapsto (x',y',r)\\
      (x,y,\leftarrow),\ (x',y',r),\ b &\mapsto b_+\\
    \end{align*}
    \textit{(note the swap between $x$ and $y$ in the third transition above to initialize the sequential application of $\delta$)}
  \item finally the head disappears in a working zone of size $1$, precisely: 
    \[b',(x,y,z),b'' \mapsto (x,y,r)\]
    for any ${\{b',b''\}\subseteq \{b,b_+\}}$.
  \end{itemize}
\end{itemize}

\begin{theorem}
  For any $F$, the problem ${\CYREACH{\convzz{F}}}$ is decidable in polynomial time. If $F$ is chosen so that $\PRED{F}$ is $\PTIME$-complete, then ${\PRED{\convzz{F}}}$ is $\PTIME$-complete.
\end{theorem}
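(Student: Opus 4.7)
My plan is to handle the two parts in turn, both relying on the structural observation that in $\convzz{F}$ (away from $e$-states) a cell of $Q'$ changes its state only when a head lies in its neighborhood, each head is confined to its initial working zone, and a bounded working zone of length $L$ containing a head collapses entirely within $O(L^2)$ steps into $b$-cells (possibly plus one residual $r$-state cell).

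For the polynomial decidability of $\CYREACH{\convzz{F}}$, let $n$ denote the description length of the input $c = {}^\infty u_L\cdot u\cdot u_R^\infty$. The case when $c$ already contains an $e$-state or a forbidden local pattern is easy since $e$ then spreads at speed $1$ and the question reduces to a simulation of duration linear in the distance from the nearest $e$-source to cell $0$, which is $O(n)$. Otherwise I decompose the line into maximal $Q'$-runs. Every run whose support intersects a periodic tail containing at least one $b$ or $b_+$ symbol has length at most $O(n)$; any other run is semi-infinite, and at most one such run exists on each side. Each bounded run holds at most one head and collapses in quadratic time; each semi-infinite run holds at most one head which, after at most one bounce on the finite side, escapes to infinity and leaves cell $0$ unaffected after $O(n)$ steps. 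Together this yields a polynomial time $T(n) = O(n^2)$ after which the state at cell $0$ is constant. Simulating $\convzz{F}$ for $T(n)$ steps on a window of radius $O(T(n))$ around $0$ then decides whether $q$ ever appears at cell $0$ along the way or equals its stable limit, all in polynomial time.

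For the $\PTIME$-completeness of $\PRED{\convzz{F}}$, membership in $\PTIME$ is standard (compute $t'$ steps of a fixed CA on a window of size $O(t')$). For hardness, I reduce $\PRED{F}$ to $\PRED{\convzz{F}}$ in logarithmic space. Given $(t, u)$ with $u\in Q^{\ball{t}}$, I set $t' = Ct^2$ for a sufficiently large constant $C$ and build $v\in R^{\ball{t'}}$ as follows: positions $-t,\ldots,t$ form a single working zone with cell $i$ in state $(u_i, u_i, l)$ for $-t\leq i<t$ and the head placed at position $t$ in state $(u_t, u_t, \leftarrow)$, while the remaining cells of $\ball{t'}$ are set to $b$. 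The core claim, proven by induction on the round number, is the zigzag simulation invariant: each complete round (a leftward pass followed by a rightward pass of the head) applies $\delta$ once inside the current effective zone and shrinks it symmetrically by one cell on each side, so that after $t$ rounds the zone has collapsed to the single cell at position $0$ whose first component equals $F^t(u)$; the zone-of-size-one rule then freezes this cell as a stable $r$-state. With $C$ large enough so that $t'$ covers this whole process, $\convzz{F}^{t'}(v)(0)$ is determined by $F^t(u)$ through a fixed projection $R\to Q$, yielding the reduction.

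The delicate point I anticipate is the inductive verification of the zigzag simulation invariant, in particular the correct interplay between the $\delta$-applying rule $(x,y,\rightarrow),(x',y',r),(x'',y'',r)\mapsto(\delta(y,x',x''),x',\rightarrow)$ and the boundary swap rule $b,(x,y,\rightarrow),(x',y',r)\mapsto(y,x,l)$ when a round begins from a boundary cell whose two $Q$-components already differ (which occurs from round two on). This is exactly where the role of the second $Q$-component of each $Q'$-state and of the ``initialization'' swap alluded to in the construction has to be unfolded in full detail. Once this invariant is established, the polynomial bound $T(n)$ in the decidability argument and the quadratic value $t' = \Theta(t^2)$ in the reduction follow by routine counting of bounces and pass lengths.
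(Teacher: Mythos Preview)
Your argument for the first part is essentially the paper's own: the same four-way case split (a forbidden pattern is present and $e$ reaches the origin in linear time; cell $0$ lies in a finite working zone of linear size which collapses in quadratic time; cell $0$ lies in an infinite zone and the unique head leaves its vicinity after linearly many steps; cell $0$ is outside any zone and becomes $b$). The conclusion that a quadratic-time simulation on a linear-radius window suffices is exactly what the paper states.

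For the second part the paper gives no self-contained argument at all --- it just cites \cite[Lemma~1 and Proposition~5]{corr/fbccca} --- so your direct LOGSPACE reduction is more explicit than anything in the paper, and it is the intended mechanism behind the construction of $\convzz{F}$. One caution about the precise invariant you announce: with the initial data $(u_i,u_i,l)$ and a zone of width exactly $2t{+}1$, the leftmost $\delta$-update in each rightward pass reads the head's second component, which at the left boundary is one generation behind (it is $F^{k-2}(u)_{L_k}$ rather than $F^{k-1}(u)_{L_k}$). This off-by-one sits precisely on the cell that becomes the new left boundary, so after $t$ rounds it lands on cell $0$ itself and your ``first component equals $F^t(u)$'' need not hold verbatim. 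The fix is cheap --- pad the zone by a constant number of cells (or read cell $0$ at the time round $t$ completes rather than after full collapse) --- and is exactly the kind of detail you rightly flagged as the delicate point; just make sure your inductive statement tracks where the boundary artifact lives rather than asserting correctness on the whole current zone.
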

\begin{proof}
  For the first part of the Theorem, let us consider an eventually bi-periodic configuration ${c={}^\infty u_L\cdot u\cdot u_R^\infty}$. There are four cases:
  \begin{itemize}
  \item $c$ is not a valid configuration, which means that it contains a working zone with a forbidden pattern. Since the forbidden pattern are locally detectable, such a forbidden pattern must be detected inside the finite word ${u_Lu_Lu_Luu_ru_Ru_R}$ (considering the worst case where $u_L$ or $u_R$ is of size $1$). Therefore, in time $t$ which is linear in the sizes of $u_L$, $u$ and $u_R$ we have ${\convzz{F}^t(c)_0=e}$;
  \item $c$ is a valid configuration and position $0$ belong to a finite working zone in $c$. Since the left and right boundary of this zone must belong either to $u$, or $u_L$ or $u_R$, the zone is of linear size and it gets completely shrinked in quadratic time, meaning that the state of position $0$ will no longer change after a quadratic time;
  \item $c$ is a valid configuration and position $0$ belongs to an infinite zone in $c$. In this case, the position of the head must belong to either $u$, $u_L$ or $u_R$ and the same for the eventual (unique) boundary of the zone. Therefore, after a linear time in the worst case, cell $0$ will never change again (for instance, the head comes from the right, bounces to the left boundary, crosses once more position $0$, but never comes back again);
  \item $c$ is a valid configuration but position $0$ does not belong to some working zone, then for any ${t\geq 1}$ we have ${\convzz{F}^t(c)_0=b}$.
  \end{itemize}
  We deduce that that after a quadratic time the state of cell $0$ does not change any more, so it is sufficient to simulate $\convzz{F}$ on $c$ for this number of states to solve problem ${\CYREACH{\convzz{F}}}$.
  
  For the second part of the Theorem, see \cite[Lemma 1 and Proposition 5]{corr/fbccca}.
\end{proof}

\section{Transient vs. repeatable circuit simulation}
In this section we focus on dimension 2 and simulation of Boolean circuit and logical gates by cellular automata. Showing how a cellular automaton can embed Boolean circuits is one of the common methods used to claim its Turing universality (see for instance \cite{liferokadur,banks,OllingerJAC}).

We are now going to describe two modes of simulation of a set of
logical gates by a cellular automaton, which were formalized in \cite{GolesMPT18}. The basic simulation mechanism behind both simulation
modes uses square blocks concatenated in a grid-like fashion. Each
such square block represents a part of a concrete Boolean circuit
(either a node or wire). The definition doesn't require any specific
way of representing information inside the blocks, just that the family of
blocks use coherent representation of information so that the Boolean
logic works when assembling them. More concretely, they communicate
information with their four neighbors (north, east, south, west) in
such a way that each one implements a Boolean function with at most 2
inputs and at most 2 outputs.

In the sequel all considered blocks will compute one of the following
maps (we represent them using type
${\{0,1\}^4\rightarrow\{0,1\}^4}$ in order to make explicit the
position of inputs and outputs among the neighbors in the order north,
east, south, west):
\begin{align*}
  \text{AND}(x,\ast,\ast,y) &= \bigl(0,\min(x,y),0,0\bigr)\\
  \text{OR}(x,\ast,\ast,y) &= \bigl(0,\max(x,y),0,0\bigr)\\
  \text{CROSS}(x,\ast,\ast,y) &= \bigl(0,y,x,0\bigr)\\
  \text{NOP}(\ast,\ast,\ast,\ast) &= \bigl(0,0,0,0\bigr)\\
  \text{FORK}(\ast,\ast,\ast,x) &= \bigl(0,x,x,0\bigr)\\
  \text{WIRE}_{i,o}\bigl(c\in\{0,1\}^4\bigr) &= k\in\{0,\ldots,3\}
  \mapsto
  \begin{cases}
    c(i) &\text{ if }k=o\\
    0 &\text{ else}
  \end{cases}
\end{align*}
for any ${i\neq o\in\{0,\ldots,3\}}$. Note that any function $f$ above
is such that \[f(0,0,0,0)=(0,0,0,0).\] We denote by ${Img(f)}$ the set
of 4-uple that can be obtained as an image of $f$. The WIRE$_{i,o}$
functions are just all the possible ways to read a bit on one side and
transmit it to another side. Together with the NOP and FORK
function they represent the basic planar wiring toolkit denoted $W$
in the sequel. The NOP gate is special in that one considers it has 4 inputs and 4 outputs.

The two circuit simulation modes share the same block representation
of circuit and information, but they differ in their requirement about
the dynamical evolution of blocks. In the first mode, called \emph{transient mode}, the gates can
be used only once and nothing is granted concerning their evolution afterwards. The second mode, called \emph{repeatable mode}, asks for each gate to go back to some acceptable state each time they are used so that they can be used again. Both modes require the
simulation to work in constant time.

Let ${\mathcal{G}\subseteq\{\text{AND,OR,CROSS}\}}$ be a set of gates. Let $F$
be a CA with states set $Q$ and $N>0$ be an integer. Consider a
set ${V\subseteq Q^{N\times N}}$ of patterns, the valid blocks, each
of which as a type $f_u$ where ${f\in \mathcal{G}\cup W}$ and ${u\in Img(f)}$
and such that, for any $f_u$, there is some block of $V$ of type
$f_u$. If a block ${B\in V}$ has type ${f_{(a,b,c,d)}}$ for some
$f$, we say it has \emph{north value} $a$, \emph{east value} $b$,
\emph{south value} $c$ and \emph{west value} $d$. Finally let
${\Delta>0}$ be some constant. A configuration is \emph{valid} if it
is a concatenation of valid blocks where output sides of a block must face input sides of its corresponding neighbors. Given a block ${B\in V}$ of type
$f_u$ in a valid configuration, we say that it \emph{makes the correct transition} if it becomes a block of type $f_v$ after $\Delta$ steps where $v = f(n,e,s,w)$
is the output of $f$ on the input read from surrounding blocks,
precisely: the block at the north of $B$ has south value $n$, the block at
the east of $B$ has west value $e$, etc.  

\textbf{Transient simulation.} We say that \emph{$F$ simulates the set of
  gates $\mathcal{G}$ in transient mode} with delay $\Delta$ and valid blocks $V$ if for any
valid configuration $c$, the configuration ${F^\Delta(c)}$ is valid
and for any $f\in \mathcal{G}\cup W$, any block of type $f_{(0,0,0,0)}$ in $c$
makes the correct transition.

\textbf{Repeatable simulation.} The simulation is \emph{repeatable} if any block in any valid configuration makes the correct transition.\\

Before stating some theorems, let us define a decision problem associated to any 2D cellular automaton that will serve as a benchmark to separate the two kinds of circuit simulation above.

\newcommand\CYCL[2]{\mathrm{CYCLE}^{#1}_{#2}}
\begin{definition}
  Let $F$ be any 2D CA of radius $r$ and alphabet $Q$, and $\phi$ a non-decreasing function
  such that ${1\leq \phi(n)\leq 2^{O(n)}}$
  The prediction problem $\CYCL{\phi}{F}$ is defined as follows:
  \begin{itemize}
  \item input: a periodic configuration $c$ of period $n\times n$
  \item output: is the length of the temporal cycle reached from $c$
    strictly greater than $\phi(n)$?
  \end{itemize}
\end{definition}

The first mode of simulation (the repeatable mode) is the strongest one, and is actually equivalent to intrinsic universality even if we use only monotone gates. In this case, although the definition does not explicitly provide crossing gates, it is always possible to realize a dynamical crossing and build arbitrary reusable bloc elements leading to intrinsic universality.

\begin{theorem}[\cite{GolesMPT18}]
  A 2D CA $F$ is intrinsically universal if and only if it can simulate a ${\text{AND},\text{OR}}$ circuitry in a repeatable way. In this case $\PRED{F}$ is $P$-complete, $\CYREACH{F}$ is $\Sigma_1^0$-complete and $\CYCL{\phi}{F}$ is $\PSPACE$-complete for some $\phi$.
\end{theorem}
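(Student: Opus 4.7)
The plan is to prove the three complexity consequences assuming the equivalence, and to prove the equivalence in two directions, the nontrivial one being that a repeatable $\{\text{AND},\text{OR}\}$ simulator is already intrinsically universal.

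For the equivalence, the easy direction is to fix once and for all a simple auxiliary CA $G$ whose cells directly implement AND/OR/wiring blocks in a repeatable way (any classical universal 2D CA such as Banks' CA will do). Since an intrinsically universal $F$ strongly simulates $G$, the rescaling ${\grp{F}{\vec m,t}}$ inherits the same block structure, and the images of the valid blocks of $G$ under the state encoding give valid blocks of $F$ realizing AND/OR circuitry with delay $t$. For the converse, I would proceed in three steps. First, dual-rail encoding: carry each logical bit on two parallel wire blocks as $(b,\bar b)$, so that AND and OR of the primary rail become OR and AND of the complementary rail and the two monotone gates together suffice to compute \emph{any} Boolean function; negation is then realized for free by swapping the two rails during wiring. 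Second, dynamical crossing: because the simulation is repeatable, a single spatial region can carry two unrelated signals during two different time slots; I would build a constant-size, clocked gadget that accepts signals from two perpendicular sides at different phases and re-emits them on the opposite sides, thereby realizing CROSS out of AND, OR and the basic planar wiring toolkit $W$. Third, once one has a functionally complete planar basis (AND, OR, NOT, FORK, CROSS, WIRE) usable repeatedly with constant delay, the local transition of an arbitrary target CA can be hard-wired as a circuit inside a macro-block, and tiling the plane with copies of this macro-block yields a strong simulation of the target, which gives intrinsic universality by definition.

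For the three complexity consequences, $\PRED{F}\in\PTIME$ is immediate by direct simulation, and $\PTIME$-hardness is obtained by reducing the monotone Circuit Value Problem: the input circuit is encoded as a valid configuration of AND/OR/wire/fork blocks, the constant per-gate delay makes the total simulation time polynomial in the circuit size, and the output bit can be read at cell $0$. For $\CYREACH{F}$, membership in $\Sigma_1^0$ is standard; for hardness one encodes an arbitrary Turing machine $M$ by an eventually bi-periodic configuration whose finite part hosts the finite control and initial tape, while the two periodic tails act as blank-tape blocks that get dynamically rewritten into active circuit blocks as the simulated head advances — this is where the \emph{repeatable} nature of the simulation is again crucial, since the same tape cells must be revisited many times. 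Finally, $\CYCL{\phi}{F}$: using a spatially periodic configuration of period $n\times n$, one can embed a reversible polynomial-space Turing machine working on $\text{poly}(n)$ bits of tape; deciding whether the temporal cycle exceeds a suitable $\phi(n)$ that separates halting from non-halting computations is PSPACE-complete by reduction from QBF or from the reachability problem of polynomial-space bounded TMs, and membership in PSPACE follows since one can simulate in space $\text{poly}(n)$.

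The main obstacle I expect is the dynamical crossing step in the backward direction of the equivalence. Planar crossing by algebraic tricks is unavailable with monotone gates (an XOR-based planar crossing cannot be built from AND and OR), so the crossing must genuinely exploit the time dimension. This requires constructing a clocking mechanism internally out of monotone gates, which is only possible once one has committed to the dual-rail encoding and proved that the clocking gadget itself is repeatable — a somewhat delicate bootstrapping step that the cited construction in \cite{GolesMPT18} handles with a carefully designed finite family of blocks.
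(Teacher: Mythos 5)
You should first note that the survey you are reading does not actually prove this theorem: it is quoted from \cite{GolesMPT18}, and the only indication of the argument given here is the sentence preceding the statement, namely that repeatability allows a \emph{dynamical crossing} and the construction of arbitrary reusable block elements, whence intrinsic universality. Your plan follows exactly that route (easy direction by pushing a fixed repeatable AND/OR simulator through an intrinsic simulation; hard direction by crossing-from-time plus macro-blocks encoding an arbitrary local rule), so at the level of strategy you are aligned with the cited construction. The easy direction as you state it does work, with the small bookkeeping remark that one must take as valid blocks of $F$ \emph{all} $\pi$-preimages of the valid blocks of the auxiliary CA and use the invariance of the sub-alphabet to see that types and correct transitions are preserved under the rescaling.

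The genuine gap is in the hard direction, and it sits precisely at the step you flag: the dynamical crossing is asserted, not constructed, and it is the entire mathematical content of the ``only if being false'' half of the equivalence; deferring it to ``the cited construction in \cite{GolesMPT18}'' means the nontrivial implication is not proved. Worse, your ordering hides a circularity: you claim negation comes ``for free by swapping the two rails during wiring'', but swapping two parallel rails in the plane \emph{is} a crossing, so dual-rail NOT cannot precede the crossing gadget. The correct order is to build CROSS first, from single-rail $\{\text{AND},\text{OR}\}\cup W$ blocks together with clock loops (oscillators realized as wire cycles, whose complementary phase is obtained by initialization rather than by negation), and only then invoke dual-rail logic; your sketch of the clocked gadget also omits the delay-synchronization and feedback issues (all paths into a gate must be padded to equal multiples of the block delay $\Delta$, and the macro-blocks need repeatable feedback of their own output) that make repeatability essential. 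Finally, the three complexity consequences are better obtained, once intrinsic universality is in hand, by invariance of the hardness of $\PRED{}$, $\CYREACH{}$ and $\CYCL{\phi}{}$ under intrinsic simulation of suitably hard CAs, rather than by your ad hoc reductions: as written, the claim that the periodic tails ``get dynamically rewritten into active circuit blocks'' is not something the block-simulation formalism gives you directly, and for $\CYCL{\phi}{F}$ you still owe an argument that a single computable $\phi$ with $1\leq\phi(n)\leq 2^{O(n)}$ separates the cycle lengths produced by accepting and rejecting computations after the simulation's spatial and temporal rescaling.
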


To illustrate the difference between the two modes we shall use the symmetric signed majority cellular automaton: it is essentially a majority rule where the state of each neighboring cell can be inverted or not before evaluating majority, this being done according to a local invariant sign vector and in a symmetric way: if cell $z$ inverts the value of its neighbor $z'$, then $z'$ will also invert the value of $z$.
We use the von Neumann neighborhood ${V= \{(0,0), (0,1), (1,0), (0,-1), (-1,0)\}}$.
The symmetric signed majority cellular automaton $F_1$ is defined over state set ${Q = \{-1,1\}^6}$. To simplify notation, we will see each state ${q\in Q}$ as a pair ${(I(q),S(q))}$ where ${I(q)\in\{0,1\}}$ represent the \emph{inner state} and ${S(q)\in \{-1,1\}^{V}}$ is a \emph{sign vector} associating a sign to each neighbor of the von Neumann neighborhood. For any configuration $c\in Q^{\Z^2}$, any cell $z$ and any cell ${z'\in z+V}$ we define the symmetric weight ${w_{zz'}\in\{-1,1\}}$ as ${w_{zz'}=(S(c_z)(z'-z))(S(c_{z'})(z-z'))}$. We note that ${w_{zz'}=w_{z'z}}$, hence the name symmetric. $F_1$ is then defined as follows.

\[F_1(c)_z = (\alpha, S(c_z))\]
where
\[\alpha =  \begin{cases}
    1 & \textrm{ if } \sum_{z' \in z+V} w_{zz'}I(c_{z'}) >0, \\ -1 & \textrm{otherwise.}
\end{cases}
\]

The following theorem shows that transient simulations are strictly weaker than repeatable simulations, $F_1$ being an example capable of the former, but not the latter.

\begin{theorem}[\cite{GolesMPT18}]
    If a 2D CA $F$ can simulate a ${\{\text{AND},\text{OR},\text{CROSS}\}}$
  circuitry in transient mode, then its associated problem $\PRED{F}$ is  $P$-complete. $F_1$ defined above can simulate a ${\{\text{AND},\text{OR},\text{CROSS}\}}$
  circuitry in transient mode. However, $F_1$ is not intrinsically universal and such that the problem $\CYCL{\phi}{F_1}$ is in $\PTIME$ if $\phi\equiv 1$ and trivial else.
\end{theorem}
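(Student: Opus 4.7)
The statement packages four claims: (i) transient $\{\text{AND},\text{OR},\text{CROSS}\}$ simulation forces $\PRED{F}$ to be $\PTIME$-complete, (ii) $F_1$ achieves such a transient simulation, (iii) $F_1$ is not intrinsically universal, and (iv) the associated problem $\CYCL{\phi}{F_1}$ is either trivial or in $\PTIME$ depending on $\phi$. I would address them in this order.

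For (i), membership of $\PRED{F}$ in $\PTIME$ is generic: since $t$ is effectively encoded in unary through the input of size $\ball{rt}$, plain simulation of $F$ on the given pattern runs in polynomial time. For $\PTIME$-hardness, I would reduce from the monotone circuit value problem under LOGSPACE reductions. Given a monotone circuit $C$ with AND/OR gates, I would draw its planar realisation on a grid using wiring elements from $W$ together with transient CROSS blocks to handle crossings, and replace each internal node by a valid block of the corresponding type with output side set to $(0,0,0,0)$. Input bits are encoded as fixed boundary blocks. By arranging levels so that every gate sees a fresh $(0,0,0,0)$ on its output side exactly when its inputs arrive, a depth-$d$ circuit is faithfully computed in $O(\Delta d)$ steps of $F$, and the output bit is read from a designated output block.

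For (ii), the design exploits the split of states into an inner bit $I$ and a fixed sign vector $S$: $S$ encodes the layout of the block while $I$ carries the information. I would first build wire blocks in each of the four cardinal directions by choosing signs that let a moving $1$ propagate linearly against a background of $0$'s. Then I would combine wires into T-junctions where the five symmetric weighted contributions force a strict positive majority exactly when both incoming bits are $1$, yielding AND; the dual sign choice gives OR. The CROSS is the subtle piece: I would separate the two traversing signals spatially inside the block and re-combine them at the exit sides, relying on the transient guarantee to dispense with any reset of the interior after the single use. The verification then reduces to a finite case analysis of the symmetric signed majority rule at the relevant cells.

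Parts (iii) and (iv) both rest on the energy argument for symmetric threshold networks. Because the sign vectors are invariant and the weights $w_{zz'}$ are symmetric, any finite $n\times n$ periodic configuration of $F_1$ lives on the torus $(\Z/n\Z)^2$ and admits a bounded symmetric-network energy, polynomial in $n$ and strictly decreasing along any pair of consecutive steps that is not a fixed point or a $2$-cycle. Hence temporal cycles of $F_1$ on periodic configurations have length $1$ or $2$, and convergence happens in polynomially many steps. For (iv), when $\phi\equiv 1$ I would simulate on the torus for this polynomial bound and test whether the attractor is a $2$-cycle; when $\phi(n)\geq 2$ for all $n$ the answer is always negative, hence trivial. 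For (iii), if $F_1$ were intrinsically universal, the previous theorem would yield some $\phi$ making $\CYCL{\phi}{F_1}$ $\PSPACE$-complete, contradicting the bound just obtained. The main obstacle is really the explicit CROSS construction in (ii), where routing two bits through the same block without spurious interaction under the rigid symmetric-sign constraint is the delicate combinatorial step.
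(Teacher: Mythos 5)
Your decomposition into (i) hardness from transient simulation, (ii) the explicit gadget construction for $F_1$, and (iii)--(iv) the symmetric-threshold energy argument is exactly the structure of the proof in the cited source (the paper itself only states the result and defers to \cite{GolesMPT18}). Parts (i), (ii) and (iv) are argued along the intended lines: monotone circuit value embedded via synchronized zero-typed blocks for $\PTIME$-hardness, and the Goles--Olivos Lyapunov functional on the $n\times n$ torus giving limit cycles of length at most $2$ with polynomial transient, so $\CYCL{1}{F_1}\in\PTIME$ and $\CYCL{\phi}{F_1}$ is trivial otherwise. You rightly flag that the CROSS block under the rigid symmetric-sign constraint is the technical heart; your sketch of it is only a plan, not a proof, but that is the acknowledged hard part of \cite{GolesMPT18}.

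The genuine gap is in (iii). You derive non-universality by invoking the previous theorem: if $F_1$ were intrinsically universal then $\CYCL{\phi}{F_1}$ would be $\PSPACE$-complete for some $\phi$, contradicting membership in $\PTIME$ (or triviality). But a $\PSPACE$-complete problem lying in $\PTIME$ is only a contradiction if $\PTIME\neq\PSPACE$, and since the quantifier is ``for some $\phi$'' you cannot exclude that the witnessing $\phi$ is the one for which the problem is in $\PTIME$ rather than trivial. So your argument proves non-universality only conditionally, whereas the theorem is unconditional. The intended argument is dynamical, not complexity-theoretic: the period-$\leq 2$ bound on spatially periodic configurations is itself incompatible with intrinsic universality, because an intrinsically universal $G$ must simulate CAs having arbitrarily long temporal cycles on spatially periodic configurations (e.g.\ shift-like or counter CAs), and the simulation relation --- cell-wise sub-automaton and factor composed with rescaling $\grp{G}{\vec{m'},t'}$ --- transports a cycle of length $p$ of the simulated CA into a cycle of $F_1$ of length at least $p$ up to the constant factor $t'$, on a configuration that is still spatially periodic. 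Since $F_1$'s cycles on periodic configurations have length at most $2$, this is a direct contradiction with no complexity hypothesis. Replacing your reduction-based step (iii) by this preservation-of-cycles argument closes the gap.
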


\section{Hidden universality}

When proving that some cellular automaton is computationally universal, it can be acceptable to avoid a precise definition of universality if the construction makes it clear enough. However, a precise definition seems necessary when ones want to show that some cellular automaton is \textbf{not} universal. To avoid formalism, one could be tempted to use a shorter path: prove that, according to some well-chosen parameter, the considered cellular automaton is too simple to be universal. The intuition is that a universal cellular automaton should have roughly the highest complexity for the parameter. This approach can be made precise and yield some proof tools of non-universality in some contexts \cite{ccca}. The purpose of this section is to recall that things can get counter-intuitive and such a parameter must be chosen carefully.

\subsection{Hidden Minsky machines simulation}

The \emph{limit set} of a cellular automaton $F$  is the nonempty closed subset
\[\Omega_F=\bigcap_{t\in\N}F^t(X).\]
It represents the set of configurations that may appear arbitrarily far in the evolution and the restriction of $F$ to $\Omega_F$ is often considered as the asymptotic dynamics of $F$. The limit language is the set of finite patterns that occur in some configuration of $\Omega_F$. It is not difficult to see that the limit language is always co-recursively enumerable. However, there are known examples of non-recursive ones \cite{langlim2}. The attentive reader of \cite{Culik89} has probably spotted the affirmation that universal cellular automata have a non-recursive limit set. Depending on the definition of universality, this affirmation can be false. The following theorem shows that arbitrary Minsky machine simulations can be realize while maintaining a simple limit set (see \cite{limuniv} for the precise definition of simulation).

\begin{theorem}[\cite{limuniv}]\label{thm:hidd-minsky-mach} For any Minsky machine $M$ there exists a CA of dimension $1$ that simulates $M$ but whose limit language is regular.
\end{theorem}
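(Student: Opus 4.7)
The plan is to build a 1D cellular automaton $F$ that, from a specific family of initial configurations, faithfully executes the classical tape-with-head simulation of the Minsky machine $M$, but whose local rule contains a ``seed'' state that can appear only at time $0$. The idea is that any configuration exhibiting a simulation artifact will then admit only a bounded backward chain of preimages, hence none of them lie in $\Omega_F=\bigcap_t F^t(X)$, and the limit language collapses to something trivially regular.

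Concretely, I would start from a standard 1D CA $G$ on an alphabet $Q$ that simulates $M$ in the usual way: a head cell carrying the finite control of $M$ travels between two distance markers encoding the two counters, everything embedded in a quiescent background $q$. I then extend $G$ into $F$ by adding a fresh \emph{seed} state $\ast$ whose sole role is to ignite the simulation: the initial configuration ${}^\infty q\cdot \ast u_M \ast\cdot q^\infty$, where $u_M$ encodes the initial state of $M$, must evolve after one step into the canonical initial configuration for the $G$-simulation, and from then on the dynamics coincides with $G$. Crucially, no local transition of $F$ is ever allowed to output $\ast$, so seeds can only disappear. In addition, every non-quiescent state of $F$ is equipped with a small ``generation tag'': the local rule accepts a non-quiescent state at cell $z$ at time $t+1$ only if the neighborhood at time $t$ contains either a $\ast$ or a non-quiescent cell whose tag locally certifies it as a legitimate parent. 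In short, the causal graph of any non-quiescent cell must, going backward in time, eventually hit a seed.

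The limit-set analysis is then essentially forced: suppose $c\in\Omega_F$ contains a non-quiescent cell. Since $c$ has preimages of every depth, the causal chain of that cell extends arbitrarily far back in time; but the tag mechanism forces this chain to reach a $\ast$ state in finitely many steps, while $\ast$ has no preimage at all, a contradiction. Hence $\Omega_F$ consists only of entirely quiescent configurations, and its language is $\{q\}^*$, which is regular. That $F$ simulates $M$ follows directly from the one-step ignition mechanism and the faithfulness of $G$. The main obstacle I expect is the careful combinatorial design of the generation tags and the rule in step~2: one must ensure simultaneously that (a) a legitimately ignited simulation runs for arbitrarily many steps without its tags ever becoming inconsistent, (b) no configuration lacking a seed admits a non-trivial preimage (i.e.\ there are no spurious predecessors that could make simulation activity appear out of nothing), and (c) the alphabet and radius remain finite. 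Blocking every potential spurious preimage — so that the causality argument in the limit-set step is watertight — is purely combinatorial but delicate, and is where the construction of \cite{limuniv} presumably does its real work.
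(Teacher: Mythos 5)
Your construction cannot work, and the obstruction is precisely the point of the theorem. The claim that a finite ``generation tag'' can force every backward causal chain of a non-quiescent cell to reach a seed after finitely many steps is impossible: the tags range over a finite set, so they cannot record an unbounded age, and the condition you impose is a purely local (finite-type) constraint on space-time diagrams. Since legitimately seeded runs produce valid simulation activity for arbitrarily many steps, a compactness argument yields bi-infinite valid histories containing no seed at all, and the configurations occurring in them are exactly the non-quiescent configurations you are trying to exclude from $\Omega_F$. Concretely, let $c_0$ be your seeded initial configuration and consider $x_t=F^t(c_0)$ recentered on the simulating head; each recentered $x_t$ lies in $F^t(X)$, these sets are closed, nested and shift-invariant, so any limit point of the sequence lies in $\Omega_F=\bigcap_t F^t(X)$ and carries a non-quiescent state at the origin. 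Hence $\Omega_F$ necessarily contains non-quiescent configurations, no matter how the seed and tags are designed.

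Even if your mechanism could be implemented, its conclusion would be self-defeating: a limit set reduced to the quiescent configurations (a single uniform configuration if $q$ is the only quiescent state) means $F$ is nilpotent by the Culik--Pachl--Yu characterization, i.e.\ every configuration is erased after a uniform time bound, which is incompatible with faithfully simulating an arbitrary (in particular non-halting) Minsky machine for unboundedly many steps. The actual result is obtained in the opposite direction from the one you take: one does not try to empty the limit set, one makes its \emph{language} simple by filling it up, arranging (in the spirit of Kari's firing-squad trick recalled in the paper, and of the construction in the cited reference) that every locally admissible pattern --- a regular set --- can appear arbitrarily late in the evolution, so that the limit language coincides with a regular language even though the automaton embeds unbounded Minsky computations. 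The delicate combinatorial work lies in this filling argument, not in blocking spurious preimages, which is provably hopeless.
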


\subsection{Hidden intrinsic universality}

Following Theorem~\ref{thm:hidd-minsky-mach}, one can go one step further and hide intrinsic universality behind a simple limit set (at the price of a complexity increase from regular to $\NL$). 
The main trick of the next theorem is inspired from \cite{kari94-2}: adding to a given cellular automaton $F$ on alphabet $Q$, a firing squad component (see Figure~\ref{fig:firingsquad}) that is able to fill-in the limit set restricted to the $Q$ component, and therefore make it simple independently of $F$.

  \newcommand\st[4]{{\draw[fill=#4] (#2,#3)
    -- ++(1,0) -- ++(0,1)--++(-1,0)--cycle;\draw (#2,#3)+(.5,.5) node {\tiny #1};}}
\newcommand\pointB[2]{}
\newcommand\pointF[2]{\st{\#}{#1}{#2}{gray}}
\newcommand\pointS[2]{\st{\#'}{#1}{#2}{gray!50!white}}
\newcommand\pointSp[2]{\st{$\gamma$}{#1}{#2}{yellow!50!white}}
\newcommand\pointXXXXX[2]{\st{$L_1$}{#1}{#2}{blue!50!white}}
\newcommand\pointXXXXXX[2]{\st{$l_1$}{#1}{#2}{blue!50!white}}
\newcommand\pointXXXXXXX[2]{\st{$R_1$}{#1}{#2}{red!50!white}}
\newcommand\pointXXXXXXXX[2]{\st{$r_1$}{#1}{#2}{red!50!white}}
\newcommand\pointXXXXXXXXX[2]{\st{$l_2$}{#1}{#2}{blue!20!white}}
\newcommand\pointXXXXXXXXXX[2]{\st{$l_2$}{#1}{#2}{blue!70!white}}
\newcommand\pointXXXXXXXXXXX[2]{\st{$r_2$}{#1}{#2}{red!20!white}}
\newcommand\pointXXXXXXXXXXXX[2]{\st{$r_2$}{#1}{#2}{red!70!white}}
\newcommand\pointXXXXXXXXXXXXX[2]{\st{X}{#1}{#2}{green!50!white}}
\newcommand\pointXXXXXXXXXXXXXX[2]{\st{Y}{#1}{#2}{green!50!white}}
\newcommand\pointXXXXXXXXXXXXXXX[2]{\st{Z}{#1}{#2}{green!50!white}}

    \begin{figure}
      \centering
      \begin{tikzpicture}[scale=.25]
      \draw[->] (-4,0) -- node[midway,sloped,above] {time} (-4,20);
      \pointB{0}{0}\pointB{0}{1}\pointB{0}{2}\pointB{0}{3}\pointB{0}{4}\pointB{0}{5}\pointB{0}{6}\pointXXXXX{0}{7}\pointB{0}{8}\pointXXXXXXXX{0}{9}\pointB{0}{10}\pointB{0}{11}\pointB{0}{12}\pointB{0}{13}\pointXXXXXXXXX{0}{14}\pointXXXXXXXXXX{0}{15}\pointB{0}{16}\pointXXXXXXX{0}{17}\pointXXXXXXXXXXX{0}{18}\pointXXXXXXXXXXXX{0}{19}\pointB{0}{20}\pointB{0}{21}\pointB{0}{22}\pointXXXXXX{0}{23}\pointB{0}{24}\pointXXXXXXX{0}{25}\pointXXXXXXXXXXX{0}{26}\pointXXXXXXXXXXXXXXX{0}{27}\pointB{0}{28}\pointS{0}{29}\pointSp{0}{30}
      \pointB{1}{0}\pointB{1}{1}\pointB{1}{2}\pointB{1}{3}\pointB{1}{4}\pointB{1}{5}\pointXXXXX{1}{6}\pointB{1}{7}\pointB{1}{8}\pointB{1}{9}\pointXXXXXXXX{1}{10}\pointB{1}{11}\pointXXXXXXXXX{1}{12}\pointXXXXXXXXXX{1}{13}\pointB{1}{14}\pointB{1}{15}\pointB{1}{16}\pointB{1}{17}\pointXXXXXXX{1}{18}\pointB{1}{19}\pointXXXXXXXXXXX{1}{20}\pointXXXXXXXXXXXX{1}{21}\pointXXXXXX{1}{22}\pointB{1}{23}\pointB{1}{24}\pointB{1}{25}\pointXXXXXXXXXXXXX{1}{26}\pointB{1}{27}\pointF{1}{28}\pointS{1}{29}\pointSp{1}{30}
      \pointB{2}{0}\pointB{2}{1}\pointB{2}{2}\pointB{2}{3}\pointB{2}{4}\pointXXXXX{2}{5}\pointB{2}{6}\pointB{2}{7}\pointB{2}{8}\pointB{2}{9}\pointXXXXXXXXX{2}{10}\pointXXXXXXXXXXXXXX{2}{11}\pointB{2}{12}\pointB{2}{13}\pointB{2}{14}\pointB{2}{15}\pointB{2}{16}\pointB{2}{17}\pointB{2}{18}\pointXXXXXXX{2}{19}\pointB{2}{20}\pointXXXXXX{2}{21}\pointXXXXXXXXXXX{2}{22}\pointXXXXXXXXXXXX{2}{23}\pointB{2}{24}\pointXXXXX{2}{25}\pointXXXXXXXXX{2}{26}\pointXXXXXXXXXXXXXX{2}{27}\pointB{2}{28}\pointS{2}{29}\pointSp{2}{30}
      \pointB{3}{0}\pointB{3}{1}\pointB{3}{2}\pointB{3}{3}\pointXXXXX{3}{4}\pointB{3}{5}\pointB{3}{6}\pointB{3}{7}\pointXXXXXXXXX{3}{8}\pointXXXXXXXXXX{3}{9}\pointB{3}{10}\pointB{3}{11}\pointXXXXXXXX{3}{12}\pointB{3}{13}\pointB{3}{14}\pointB{3}{15}\pointB{3}{16}\pointB{3}{17}\pointB{3}{18}\pointB{3}{19}\pointXXXXXXXXXXXXX{3}{20}\pointB{3}{21}\pointB{3}{22}\pointB{3}{23}\pointF{3}{24}\pointS{3}{25}\pointS{3}{26}\pointS{3}{27}\pointF{3}{28}\pointS{3}{29}\pointSp{3}{30}
      \pointB{4}{0}\pointB{4}{1}\pointB{4}{2}\pointXXXXX{4}{3}\pointB{4}{4}\pointB{4}{5}\pointXXXXXXXXX{4}{6}\pointXXXXXXXXXX{4}{7}\pointB{4}{8}\pointB{4}{9}\pointB{4}{10}\pointB{4}{11}\pointB{4}{12}\pointXXXXXXXX{4}{13}\pointB{4}{14}\pointB{4}{15}\pointB{4}{16}\pointB{4}{17}\pointB{4}{18}\pointXXXXX{4}{19}\pointB{4}{20}\pointXXXXXXXX{4}{21}\pointXXXXXXXXX{4}{22}\pointXXXXXXXXXX{4}{23}\pointB{4}{24}\pointXXXXXXX{4}{25}\pointXXXXXXXXXXX{4}{26}\pointXXXXXXXXXXXXXXX{4}{27}\pointB{4}{28}\pointS{4}{29}\pointSp{4}{30}
      \pointB{5}{0}\pointB{5}{1}\pointXXXXX{5}{2}\pointB{5}{3}\pointXXXXXXXXX{5}{4}\pointXXXXXXXXXX{5}{5}\pointB{5}{6}\pointB{5}{7}\pointB{5}{8}\pointB{5}{9}\pointB{5}{10}\pointB{5}{11}\pointB{5}{12}\pointB{5}{13}\pointXXXXXXXX{5}{14}\pointB{5}{15}\pointB{5}{16}\pointB{5}{17}\pointXXXXX{5}{18}\pointB{5}{19}\pointXXXXXXXXX{5}{20}\pointXXXXXXXXXX{5}{21}\pointXXXXXXXX{5}{22}\pointB{5}{23}\pointB{5}{24}\pointB{5}{25}\pointXXXXXXXXXXXXX{5}{26}\pointB{5}{27}\pointF{5}{28}\pointS{5}{29}\pointSp{5}{30}
      \pointB{6}{0}\pointXXXXX{6}{1}\pointXXXXXXXXX{6}{2}\pointXXXXXXXXXX{6}{3}\pointB{6}{4}\pointB{6}{5}\pointB{6}{6}\pointB{6}{7}\pointB{6}{8}\pointB{6}{9}\pointB{6}{10}\pointB{6}{11}\pointB{6}{12}\pointB{6}{13}\pointB{6}{14}\pointXXXXXXXX{6}{15}\pointB{6}{16}\pointXXXXX{6}{17}\pointXXXXXXXXX{6}{18}\pointXXXXXXXXXX{6}{19}\pointB{6}{20}\pointB{6}{21}\pointB{6}{22}\pointXXXXXXXX{6}{23}\pointB{6}{24}\pointXXXXX{6}{25}\pointXXXXXXXXX{6}{26}\pointXXXXXXXXXXXXXX{6}{27}\pointB{6}{28}\pointS{6}{29}\pointSp{6}{30}
      \pointF{7}{0}\pointS{7}{1}\pointS{7}{2}\pointS{7}{3}\pointS{7}{4}\pointS{7}{5}\pointS{7}{6}\pointS{7}{7}\pointS{7}{8}\pointS{7}{9}\pointS{7}{10}\pointS{7}{11}\pointS{7}{12}\pointS{7}{13}\pointS{7}{14}\pointS{7}{15}\pointF{7}{16}\pointS{7}{17}\pointS{7}{18}\pointS{7}{19}\pointS{7}{20}\pointS{7}{21}\pointS{7}{22}\pointS{7}{23}\pointF{7}{24}\pointS{7}{25}\pointS{7}{26}\pointS{7}{27}\pointF{7}{28}\pointS{7}{29}\pointSp{7}{30}
      \pointB{8}{0}\pointXXXXXXX{8}{1}\pointXXXXXXXXXXX{8}{2}\pointXXXXXXXXXXXX{8}{3}\pointB{8}{4}\pointB{8}{5}\pointB{8}{6}\pointB{8}{7}\pointB{8}{8}\pointB{8}{9}\pointB{8}{10}\pointB{8}{11}\pointB{8}{12}\pointB{8}{13}\pointB{8}{14}\pointXXXXXX{8}{15}\pointB{8}{16}\pointXXXXXXX{8}{17}\pointXXXXXXXXXXX{8}{18}\pointXXXXXXXXXXXX{8}{19}\pointB{8}{20}\pointB{8}{21}\pointB{8}{22}\pointXXXXXX{8}{23}\pointB{8}{24}\pointXXXXXXX{8}{25}\pointXXXXXXXXXXX{8}{26}\pointXXXXXXXXXXXXXXX{8}{27}\pointB{8}{28}\pointS{8}{29}\pointSp{8}{30}
      \pointB{9}{0}\pointB{9}{1}\pointXXXXXXX{9}{2}\pointB{9}{3}\pointXXXXXXXXXXX{9}{4}\pointXXXXXXXXXXXX{9}{5}\pointB{9}{6}\pointB{9}{7}\pointB{9}{8}\pointB{9}{9}\pointB{9}{10}\pointB{9}{11}\pointB{9}{12}\pointB{9}{13}\pointXXXXXX{9}{14}\pointB{9}{15}\pointB{9}{16}\pointB{9}{17}\pointXXXXXXX{9}{18}\pointB{9}{19}\pointXXXXXXXXXXX{9}{20}\pointXXXXXXXXXXXX{9}{21}\pointXXXXXX{9}{22}\pointB{9}{23}\pointB{9}{24}\pointB{9}{25}\pointXXXXXXXXXXXXX{9}{26}\pointB{9}{27}\pointF{9}{28}\pointS{9}{29}\pointSp{9}{30}
      \pointB{10}{0}\pointB{10}{1}\pointB{10}{2}\pointXXXXXXX{10}{3}\pointB{10}{4}\pointB{10}{5}\pointXXXXXXXXXXX{10}{6}\pointXXXXXXXXXXXX{10}{7}\pointB{10}{8}\pointB{10}{9}\pointB{10}{10}\pointB{10}{11}\pointB{10}{12}\pointXXXXXX{10}{13}\pointB{10}{14}\pointB{10}{15}\pointB{10}{16}\pointB{10}{17}\pointB{10}{18}\pointXXXXXXX{10}{19}\pointB{10}{20}\pointXXXXXX{10}{21}\pointXXXXXXXXXXX{10}{22}\pointXXXXXXXXXXXX{10}{23}\pointB{10}{24}\pointXXXXX{10}{25}\pointXXXXXXXXX{10}{26}\pointXXXXXXXXXXXXXX{10}{27}\pointB{10}{28}\pointS{10}{29}\pointSp{10}{30}
      \pointB{11}{0}\pointB{11}{1}\pointB{11}{2}\pointB{11}{3}\pointXXXXXXX{11}{4}\pointB{11}{5}\pointB{11}{6}\pointB{11}{7}\pointXXXXXXXXXXX{11}{8}\pointXXXXXXXXXXXX{11}{9}\pointB{11}{10}\pointB{11}{11}\pointXXXXXX{11}{12}\pointB{11}{13}\pointB{11}{14}\pointB{11}{15}\pointB{11}{16}\pointB{11}{17}\pointB{11}{18}\pointB{11}{19}\pointXXXXXXXXXXXXX{11}{20}\pointB{11}{21}\pointB{11}{22}\pointB{11}{23}\pointF{11}{24}\pointS{11}{25}\pointS{11}{26}\pointS{11}{27}\pointF{11}{28}\pointS{11}{29}\pointSp{11}{30}
      \pointB{12}{0}\pointB{12}{1}\pointB{12}{2}\pointB{12}{3}\pointB{12}{4}\pointXXXXXXX{12}{5}\pointB{12}{6}\pointB{12}{7}\pointB{12}{8}\pointB{12}{9}\pointXXXXXXXXXXX{12}{10}\pointXXXXXXXXXXXXXXX{12}{11}\pointB{12}{12}\pointB{12}{13}\pointB{12}{14}\pointB{12}{15}\pointB{12}{16}\pointB{12}{17}\pointB{12}{18}\pointXXXXX{12}{19}\pointB{12}{20}\pointXXXXXXXX{12}{21}\pointXXXXXXXXX{12}{22}\pointXXXXXXXXXX{12}{23}\pointB{12}{24}\pointXXXXXXX{12}{25}\pointXXXXXXXXXXX{12}{26}\pointXXXXXXXXXXXXXXX{12}{27}\pointB{12}{28}\pointS{12}{29}\pointSp{12}{30}
      \pointB{13}{0}\pointB{13}{1}\pointB{13}{2}\pointB{13}{3}\pointB{13}{4}\pointB{13}{5}\pointXXXXXXX{13}{6}\pointB{13}{7}\pointB{13}{8}\pointB{13}{9}\pointXXXXXX{13}{10}\pointB{13}{11}\pointXXXXXXXXXXX{13}{12}\pointXXXXXXXXXXXX{13}{13}\pointB{13}{14}\pointB{13}{15}\pointB{13}{16}\pointB{13}{17}\pointXXXXX{13}{18}\pointB{13}{19}\pointXXXXXXXXX{13}{20}\pointXXXXXXXXXX{13}{21}\pointXXXXXXXX{13}{22}\pointB{13}{23}\pointB{13}{24}\pointB{13}{25}\pointXXXXXXXXXXXXX{13}{26}\pointB{13}{27}\pointF{13}{28}\pointS{13}{29}\pointSp{13}{30}
      \pointB{14}{0}\pointB{14}{1}\pointB{14}{2}\pointB{14}{3}\pointB{14}{4}\pointB{14}{5}\pointB{14}{6}\pointXXXXXXX{14}{7}\pointB{14}{8}\pointXXXXXX{14}{9}\pointB{14}{10}\pointB{14}{11}\pointB{14}{12}\pointB{14}{13}\pointXXXXXXXXXXX{14}{14}\pointXXXXXXXXXXXX{14}{15}\pointB{14}{16}\pointXXXXX{14}{17}\pointXXXXXXXXX{14}{18}\pointXXXXXXXXXX{14}{19}\pointB{14}{20}\pointB{14}{21}\pointB{14}{22}\pointXXXXXXXX{14}{23}\pointB{14}{24}\pointXXXXX{14}{25}\pointXXXXXXXXX{14}{26}\pointXXXXXXXXXXXXXX{14}{27}\pointB{14}{28}\pointS{14}{29}\pointSp{14}{30}
      \pointB{15}{0}\pointB{15}{1}\pointB{15}{2}\pointB{15}{3}\pointB{15}{4}\pointB{15}{5}\pointB{15}{6}\pointB{15}{7}\pointXXXXXXXXXXXXX{15}{8}\pointB{15}{9}\pointB{15}{10}\pointB{15}{11}\pointB{15}{12}\pointB{15}{13}\pointB{15}{14}\pointB{15}{15}\pointF{15}{16}\pointS{15}{17}\pointS{15}{18}\pointS{15}{19}\pointS{15}{20}\pointS{15}{21}\pointS{15}{22}\pointS{15}{23}\pointF{15}{24}\pointS{15}{25}\pointS{15}{26}\pointS{15}{27}\pointF{15}{28}\pointS{15}{29}\pointSp{15}{30}
      \pointB{16}{0}\pointB{16}{1}\pointB{16}{2}\pointB{16}{3}\pointB{16}{4}\pointB{16}{5}\pointB{16}{6}\pointXXXXX{16}{7}\pointB{16}{8}\pointXXXXXXXX{16}{9}\pointB{16}{10}\pointB{16}{11}\pointB{16}{12}\pointB{16}{13}\pointXXXXXXXXX{16}{14}\pointXXXXXXXXXX{16}{15}\pointB{16}{16}\pointXXXXXXX{16}{17}\pointXXXXXXXXXXX{16}{18}\pointXXXXXXXXXXXX{16}{19}\pointB{16}{20}\pointB{16}{21}\pointB{16}{22}\pointXXXXXX{16}{23}\pointB{16}{24}\pointXXXXXXX{16}{25}\pointXXXXXXXXXXX{16}{26}\pointXXXXXXXXXXXXXXX{16}{27}\pointB{16}{28}\pointS{16}{29}\pointSp{16}{30}
      \pointB{17}{0}\pointB{17}{1}\pointB{17}{2}\pointB{17}{3}\pointB{17}{4}\pointB{17}{5}\pointXXXXX{17}{6}\pointB{17}{7}\pointB{17}{8}\pointB{17}{9}\pointXXXXXXXX{17}{10}\pointB{17}{11}\pointXXXXXXXXX{17}{12}\pointXXXXXXXXXX{17}{13}\pointB{17}{14}\pointB{17}{15}\pointB{17}{16}\pointB{17}{17}\pointXXXXXXX{17}{18}\pointB{17}{19}\pointXXXXXXXXXXX{17}{20}\pointXXXXXXXXXXXX{17}{21}\pointXXXXXX{17}{22}\pointB{17}{23}\pointB{17}{24}\pointB{17}{25}\pointXXXXXXXXXXXXX{17}{26}\pointB{17}{27}\pointF{17}{28}\pointS{17}{29}\pointSp{17}{30}
      \pointB{18}{0}\pointB{18}{1}\pointB{18}{2}\pointB{18}{3}\pointB{18}{4}\pointXXXXX{18}{5}\pointB{18}{6}\pointB{18}{7}\pointB{18}{8}\pointB{18}{9}\pointXXXXXXXXX{18}{10}\pointXXXXXXXXXXXXXX{18}{11}\pointB{18}{12}\pointB{18}{13}\pointB{18}{14}\pointB{18}{15}\pointB{18}{16}\pointB{18}{17}\pointB{18}{18}\pointXXXXXXX{18}{19}\pointB{18}{20}\pointXXXXXX{18}{21}\pointXXXXXXXXXXX{18}{22}\pointXXXXXXXXXXXX{18}{23}\pointB{18}{24}\pointXXXXX{18}{25}\pointXXXXXXXXX{18}{26}\pointXXXXXXXXXXXXXX{18}{27}\pointB{18}{28}\pointS{18}{29}\pointSp{18}{30}
      \pointB{19}{0}\pointB{19}{1}\pointB{19}{2}\pointB{19}{3}\pointXXXXX{19}{4}\pointB{19}{5}\pointB{19}{6}\pointB{19}{7}\pointXXXXXXXXX{19}{8}\pointXXXXXXXXXX{19}{9}\pointB{19}{10}\pointB{19}{11}\pointXXXXXXXX{19}{12}\pointB{19}{13}\pointB{19}{14}\pointB{19}{15}\pointB{19}{16}\pointB{19}{17}\pointB{19}{18}\pointB{19}{19}\pointXXXXXXXXXXXXX{19}{20}\pointB{19}{21}\pointB{19}{22}\pointB{19}{23}\pointF{19}{24}\pointS{19}{25}\pointS{19}{26}\pointS{19}{27}\pointF{19}{28}\pointS{19}{29}\pointSp{19}{30}
      \pointB{20}{0}\pointB{20}{1}\pointB{20}{2}\pointXXXXX{20}{3}\pointB{20}{4}\pointB{20}{5}\pointXXXXXXXXX{20}{6}\pointXXXXXXXXXX{20}{7}\pointB{20}{8}\pointB{20}{9}\pointB{20}{10}\pointB{20}{11}\pointB{20}{12}\pointXXXXXXXX{20}{13}\pointB{20}{14}\pointB{20}{15}\pointB{20}{16}\pointB{20}{17}\pointB{20}{18}\pointXXXXX{20}{19}\pointB{20}{20}\pointXXXXXXXX{20}{21}\pointXXXXXXXXX{20}{22}\pointXXXXXXXXXX{20}{23}\pointB{20}{24}\pointXXXXXXX{20}{25}\pointXXXXXXXXXXX{20}{26}\pointXXXXXXXXXXXXXXX{20}{27}\pointB{20}{28}\pointS{20}{29}\pointSp{20}{30}
      \pointB{21}{0}\pointB{21}{1}\pointXXXXX{21}{2}\pointB{21}{3}\pointXXXXXXXXX{21}{4}\pointXXXXXXXXXX{21}{5}\pointB{21}{6}\pointB{21}{7}\pointB{21}{8}\pointB{21}{9}\pointB{21}{10}\pointB{21}{11}\pointB{21}{12}\pointB{21}{13}\pointXXXXXXXX{21}{14}\pointB{21}{15}\pointB{21}{16}\pointB{21}{17}\pointXXXXX{21}{18}\pointB{21}{19}\pointXXXXXXXXX{21}{20}\pointXXXXXXXXXX{21}{21}\pointXXXXXXXX{21}{22}\pointB{21}{23}\pointB{21}{24}\pointB{21}{25}\pointXXXXXXXXXXXXX{21}{26}\pointB{21}{27}\pointF{21}{28}\pointS{21}{29}\pointSp{21}{30}
      \pointB{22}{0}\pointXXXXX{22}{1}\pointXXXXXXXXX{22}{2}\pointXXXXXXXXXX{22}{3}\pointB{22}{4}\pointB{22}{5}\pointB{22}{6}\pointB{22}{7}\pointB{22}{8}\pointB{22}{9}\pointB{22}{10}\pointB{22}{11}\pointB{22}{12}\pointB{22}{13}\pointB{22}{14}\pointXXXXXXXX{22}{15}\pointB{22}{16}\pointXXXXX{22}{17}\pointXXXXXXXXX{22}{18}\pointXXXXXXXXXX{22}{19}\pointB{22}{20}\pointB{22}{21}\pointB{22}{22}\pointXXXXXXXX{22}{23}\pointB{22}{24}\pointXXXXX{22}{25}\pointXXXXXXXXX{22}{26}\pointXXXXXXXXXXXXXX{22}{27}\pointB{22}{28}\pointS{22}{29}\pointSp{22}{30}
      \pointF{23}{0}\pointS{23}{1}\pointS{23}{2}\pointS{23}{3}\pointS{23}{4}\pointS{23}{5}\pointS{23}{6}\pointS{23}{7}\pointS{23}{8}\pointS{23}{9}\pointS{23}{10}\pointS{23}{11}\pointS{23}{12}\pointS{23}{13}\pointS{23}{14}\pointS{23}{15}\pointF{23}{16}\pointS{23}{17}\pointS{23}{18}\pointS{23}{19}\pointS{23}{20}\pointS{23}{21}\pointS{23}{22}\pointS{23}{23}\pointF{23}{24}\pointS{23}{25}\pointS{23}{26}\pointS{23}{27}\pointF{23}{28}\pointS{23}{29}\pointSp{23}{30}
      \pointB{24}{0}\pointXXXXXXX{24}{1}\pointXXXXXXXXXXX{24}{2}\pointXXXXXXXXXXXX{24}{3}\pointB{24}{4}\pointB{24}{5}\pointB{24}{6}\pointB{24}{7}\pointB{24}{8}\pointB{24}{9}\pointB{24}{10}\pointB{24}{11}\pointB{24}{12}\pointB{24}{13}\pointB{24}{14}\pointXXXXXX{24}{15}\pointB{24}{16}\pointXXXXXXX{24}{17}\pointXXXXXXXXXXX{24}{18}\pointXXXXXXXXXXXX{24}{19}\pointB{24}{20}\pointB{24}{21}\pointB{24}{22}\pointXXXXXX{24}{23}\pointB{24}{24}\pointXXXXXXX{24}{25}\pointXXXXXXXXXXX{24}{26}\pointXXXXXXXXXXXXXXX{24}{27}\pointB{24}{28}\pointS{24}{29}\pointSp{24}{30}
      \pointB{25}{0}\pointB{25}{1}\pointXXXXXXX{25}{2}\pointB{25}{3}\pointXXXXXXXXXXX{25}{4}\pointXXXXXXXXXXXX{25}{5}\pointB{25}{6}\pointB{25}{7}\pointB{25}{8}\pointB{25}{9}\pointB{25}{10}\pointB{25}{11}\pointB{25}{12}\pointB{25}{13}\pointXXXXXX{25}{14}\pointB{25}{15}\pointB{25}{16}\pointB{25}{17}\pointXXXXXXX{25}{18}\pointB{25}{19}\pointXXXXXXXXXXX{25}{20}\pointXXXXXXXXXXXX{25}{21}\pointXXXXXX{25}{22}\pointB{25}{23}\pointB{25}{24}\pointB{25}{25}\pointXXXXXXXXXXXXX{25}{26}\pointB{25}{27}\pointF{25}{28}\pointS{25}{29}\pointSp{25}{30}
      \pointB{26}{0}\pointB{26}{1}\pointB{26}{2}\pointXXXXXXX{26}{3}\pointB{26}{4}\pointB{26}{5}\pointXXXXXXXXXXX{26}{6}\pointXXXXXXXXXXXX{26}{7}\pointB{26}{8}\pointB{26}{9}\pointB{26}{10}\pointB{26}{11}\pointB{26}{12}\pointXXXXXX{26}{13}\pointB{26}{14}\pointB{26}{15}\pointB{26}{16}\pointB{26}{17}\pointB{26}{18}\pointXXXXXXX{26}{19}\pointB{26}{20}\pointXXXXXX{26}{21}\pointXXXXXXXXXXX{26}{22}\pointXXXXXXXXXXXX{26}{23}\pointB{26}{24}\pointXXXXX{26}{25}\pointXXXXXXXXX{26}{26}\pointXXXXXXXXXXXXXX{26}{27}\pointB{26}{28}\pointS{26}{29}\pointSp{26}{30}
      \pointB{27}{0}\pointB{27}{1}\pointB{27}{2}\pointB{27}{3}\pointXXXXXXX{27}{4}\pointB{27}{5}\pointB{27}{6}\pointB{27}{7}\pointXXXXXXXXXXX{27}{8}\pointXXXXXXXXXXXX{27}{9}\pointB{27}{10}\pointB{27}{11}\pointXXXXXX{27}{12}\pointB{27}{13}\pointB{27}{14}\pointB{27}{15}\pointB{27}{16}\pointB{27}{17}\pointB{27}{18}\pointB{27}{19}\pointXXXXXXXXXXXXX{27}{20}\pointB{27}{21}\pointB{27}{22}\pointB{27}{23}\pointF{27}{24}\pointS{27}{25}\pointS{27}{26}\pointS{27}{27}\pointF{27}{28}\pointS{27}{29}\pointSp{27}{30}
      \pointB{28}{0}\pointB{28}{1}\pointB{28}{2}\pointB{28}{3}\pointB{28}{4}\pointXXXXXXX{28}{5}\pointB{28}{6}\pointB{28}{7}\pointB{28}{8}\pointB{28}{9}\pointXXXXXXXXXXX{28}{10}\pointXXXXXXXXXXXXXXX{28}{11}\pointB{28}{12}\pointB{28}{13}\pointB{28}{14}\pointB{28}{15}\pointB{28}{16}\pointB{28}{17}\pointB{28}{18}\pointXXXXX{28}{19}\pointB{28}{20}\pointXXXXXXXX{28}{21}\pointXXXXXXXXX{28}{22}\pointXXXXXXXXXX{28}{23}\pointB{28}{24}\pointXXXXXXX{28}{25}\pointXXXXXXXXXXX{28}{26}\pointXXXXXXXXXXXXXXX{28}{27}\pointB{28}{28}\pointS{28}{29}\pointSp{28}{30}
      \pointB{29}{0}\pointB{29}{1}\pointB{29}{2}\pointB{29}{3}\pointB{29}{4}\pointB{29}{5}\pointXXXXXXX{29}{6}\pointB{29}{7}\pointB{29}{8}\pointB{29}{9}\pointXXXXXX{29}{10}\pointB{29}{11}\pointXXXXXXXXXXX{29}{12}\pointXXXXXXXXXXXX{29}{13}\pointB{29}{14}\pointB{29}{15}\pointB{29}{16}\pointB{29}{17}\pointXXXXX{29}{18}\pointB{29}{19}\pointXXXXXXXXX{29}{20}\pointXXXXXXXXXX{29}{21}\pointXXXXXXXX{29}{22}\pointB{29}{23}\pointB{29}{24}\pointB{29}{25}\pointXXXXXXXXXXXXX{29}{26}\pointB{29}{27}\pointF{29}{28}\pointS{29}{29}\pointSp{29}{30}
      \pointB{30}{0}\pointB{30}{1}\pointB{30}{2}\pointB{30}{3}\pointB{30}{4}\pointB{30}{5}\pointB{30}{6}\pointXXXXXXX{30}{7}\pointB{30}{8}\pointXXXXXX{30}{9}\pointB{30}{10}\pointB{30}{11}\pointB{30}{12}\pointB{30}{13}\pointXXXXXXXXXXX{30}{14}\pointXXXXXXXXXXXX{30}{15}\pointB{30}{16}\pointXXXXX{30}{17}\pointXXXXXXXXX{30}{18}\pointXXXXXXXXXX{30}{19}\pointB{30}{20}\pointB{30}{21}\pointB{30}{22}\pointXXXXXXXX{30}{23}\pointB{30}{24}\pointXXXXX{30}{25}\pointXXXXXXXXX{30}{26}\pointXXXXXXXXXXXXXX{30}{27}\pointB{30}{28}\pointS{30}{29}\pointSp{30}{30}
      \pointB{31}{0}\pointB{31}{1}\pointB{31}{2}\pointB{31}{3}\pointB{31}{4}\pointB{31}{5}\pointB{31}{6}\pointB{31}{7}\pointXXXXXXXXXXXXX{31}{8}\pointB{31}{9}\pointB{31}{10}\pointB{31}{11}\pointB{31}{12}\pointB{31}{13}\pointB{31}{14}\pointB{31}{15}\pointF{31}{16}\pointS{31}{17}\pointS{31}{18}\pointS{31}{19}\pointS{31}{20}\pointS{31}{21}\pointS{31}{22}\pointS{31}{23}\pointF{31}{24}\pointS{31}{25}\pointS{31}{26}\pointS{31}{27}\pointF{31}{28}\pointS{31}{29}\pointSp{31}{30}
    \end{tikzpicture}
    \caption{J. Kari's firing squad trick: a synchronous apparition of $\gamma$ can be triggered arbitrarily far in time, thus allowing to complete the limit set on some component of states to the full-shift.}
    \label{fig:firingsquad}
  \end{figure}

\begin{theorem}[\cite{GuillonMT10}]
  There exists an intrinsically universal CA whose limit language is $\NL$.
\end{theorem}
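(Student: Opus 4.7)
The plan is to implement the firing squad trick of \cite{kari94-2} on top of a fixed intrinsically universal CA, so as to dissociate universality (carried by a computational layer) from the limit language (carried by an auxiliary layer). Fix an intrinsically universal 1D CA $F$ on alphabet $Q$. We build $G$ on alphabet $Q_G = Q\times S$, where $S$ is a state set carrying a firing squad mechanism with a completion state $\gamma$ and a quiescent state $s_0$. The first layer evolves almost everywhere as $F$, except that whenever a cell has $S$-component $\gamma$ its $Q$-component is overwritten by a value prescribed by the $S$-layer. The second layer evolves autonomously: initial patterns delimited by border states launch firing squads that produce a synchronized $\gamma$-line after a delay controlled by the width of the zone, as depicted in Figure~\ref{fig:firingsquad}.

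For intrinsic universality, let $H$ be any CA. Since $F$ is intrinsically universal it strongly simulates $H$: there are parameters $\vec{m},t$ with $H\subproj \grp{F}{\vec{m},t}$. The subset $Q\times\{s_0\}\subseteq Q_G$ is invariant under $G$ (no firing squad is triggered on a uniform $s_0$ background and $\gamma$ never appears), and the restriction of $G$ to this subset projects to $F$ on the $Q$-component. Hence $F\subproj G$ and, by blockwise restriction, $\grp{F}{\vec{m},t}\subproj \grp{G}{\vec{m},t}$; by transitivity we obtain $H\subproj \grp{G}{\vec{m},t}$, so $G$ is intrinsically universal.

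The main work is to bound the limit language. A pattern $p=(p_Q,p_S)\in Q_G^{\ball{n}}$ belongs to the limit language iff for every $t\in\N$ there is a configuration $c_t$ with $p$ appearing at position $0$ in $G^t(c_t)$. The firing squad mechanism is designed so that, given any $p_Q\in Q^{\ball{n}}$ and any $t$, one can prepare an initial configuration whose autonomous $S$-dynamics is compatible with $p_S$ at time $t$ and simultaneously contains a zone wide enough that its firing line hits $\ball{n}$ exactly at time $t$, with payload cells chosen so that the $\gamma$-rewrites on the $Q$-layer produce $p_Q$ in the window. Membership of $p$ thus reduces to a condition on $p_S$ alone, namely that $p_S$ be a pattern that appears arbitrarily late in the autonomous firing squad dynamics. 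This set is characterised by coherence conditions on signals (a $\gamma$ must be supported by a valid firing front, signal positions and timings must be geometrically consistent, and so on); these conditions are not regular but are recognised by a nondeterministic logspace machine that guesses the global geometry of the firing squad (positions of borders and signal fronts) and verifies the local consistency of $p_S$ against its guess.

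The main obstacle is this last step: designing $S$ so that (i) any target $Q$-pattern is realisable at any arbitrarily large time while permitting an arbitrary compatible $S$-background, which requires a firing squad variant with tunable delays, payload cells, and a periodic ``inert'' background; and (ii) the residual constraints on $S$-patterns occurring arbitrarily late are simultaneously satisfiable and verifiable in $\NL$. Kari's firing squad construction from \cite{kari94-2} is engineered precisely to enable this dissociation, and the adaptation here consists in routing the firing signal to overwrite the $Q$-layer rather than resetting the $S$-layer itself.
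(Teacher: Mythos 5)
Your overall strategy --- grafting Kari's firing-squad layer onto a fixed intrinsically universal CA so that a synchronized $\gamma$ can overwrite the $Q$-layer with arbitrary payload arbitrarily late --- is exactly the trick the paper attributes to \cite{kari94-2} and \cite{GuillonMT10} (the survey itself gives no proof, only this idea and Figure~\ref{fig:firingsquad}), and your universality argument via the $G$-invariant subalphabet $Q\times\{s_0\}$ is sound. The problem lies in the limit-language analysis, where your central claim --- that membership of $p=(p_Q,p_S)$ in the limit language reduces to a condition on $p_S$ alone --- is false for the automaton you describe, and false precisely because of the feature you rely on for universality. Wherever the $S$-part of a window is not fired, the $Q$-part at the observed time was produced by genuine applications of $F$ (your overwrite acts only where $\gamma$ is present). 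Hence a pattern $(p_Q,s_0^n)$ can belong to the limit language only if $p_Q$ admits $F$-preimages of depth at least the time needed for the $S$-layer to relax from $\gamma$ back to $s_0$; in particular, if $p_Q$ has no $F$-preimage at all (such patterns exist unless your chosen $F$ is surjective), then $(p_Q,s_0^n)$ is excluded, whereas $s_0^n$ certainly ``appears arbitrarily late in the autonomous firing-squad dynamics''. So cross-constraints between the two layers do survive in the limit set, and similar constraints appear for windows whose $S$-part is mid-squad or mid-relaxation.

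Whether these residual cross-constraints are harmless depends exactly on the design choices you leave open. If the post-firing states relax to $s_0$ in a bounded number $k$ of steps and the firing can be scheduled at every sufficiently late time, the constraint on quiescent windows becomes ``$p_Q$ has an $F^k$-preimage'', a sofic hence regular condition, and an $\NL$ bound becomes plausible again once the $S$-layer consistency conditions are worked out. But if the fired or post-fired states are absorbing, or cannot re-invade arbitrary windows, then quiescent windows require arbitrarily deep preimages, and the limit language of $G$ contains a copy of the limit language of your arbitrarily chosen universal $F$ on the subalphabet $Q\times\{s_0\}$, which is not controlled and defeats the statement. Engineering the $S$-layer (triggering with arbitrary delay, payload placement, controlled relaxation after firing) and then characterizing exactly which two-layer patterns occur arbitrarily late, and checking that this characterization is in $\NL$, is the actual content of \cite{GuillonMT10}; your proposal names this obstacle but the reduction you substitute for it is incorrect, so the $\NL$ bound is not established as written.
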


Given ${n\in\N}$, the \emph{column factor} of width ${n}$ of $F$, ${\Sigma_n(F)}$, is the set of columns that can appear in space-time diagrams of $F$: 
\[\Sigma_n(F) = \bigl\{(u_t)_{t\in\N} : u_t\in Q^n, u_t= F^t(c)_{[1,n]}, c\in Q^{\Z^d}\bigr\}.\]
To ${\Sigma_n(F)}$ we associate its language of finite patterns ${L(\Sigma_n(F))}$ defined as the set of words ${u_t\cdots u_{t+k}}$ for some ${(u_t)_{t\in\N}\in\Sigma_n(F)}$ and ${t,k\in\N}$. 

The approach of \cite{DelvenneKB06} to define universality for general dynamical systems translates into the following in our settings. To $F$ we associate the model checking problem:
\begin{itemize}
\item \textbf{input:} $n$ and a regular language $L_n$ over alphabet $Q^n$,
\item \textbf{question:} decide whether $L_n$ intersects $L(\Sigma_n(F))$.
\end{itemize}

$F$ is \emph{BDK-universal} if its associated model checking problem is r.e.-complete. Like for limit sets, column factors can be filled up and thus simplified by increasing the alphabet starting from an arbitrarily complex cellular automaton.

\begin{theorem}[\cite{GuillonMT10}]
  There exists an intrinsically universal CA $F$ such that ${L(\Sigma_n(F))}$ are regular languages computable from $n$. In particular, such $F$ is not BDK-universal.
\end{theorem}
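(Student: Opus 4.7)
The plan is to extend the firing-squad trick of the preceding theorem (after Kari, illustrated in Figure~\ref{fig:firingsquad}) so that it simultaneously simplifies the column factors of every width. Starting from an intrinsically universal CA $G$ on alphabet $Q_G$, build $F$ on an enlarged alphabet by superposing a firing-squad layer on top of the $G$-dynamics. The firing-squad layer is designed so that, from any initial configuration, synchronous reset signals are emitted that overwrite the $Q_G$-component with a default pattern in the swept regions. The crucial tuning is that firings are scheduled to reach every cell of a width-$n$ column within a delay that is bounded effectively in $n$ (and not dependent on the rest of the configuration).

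Second, I would verify that $F$ remains intrinsically universal. The standard route is to put the firing-squad dynamics on a layer that is invisible to a coarse rescaled simulation of $G$: at a sufficiently large space-time rescaling $\vec{m},t$, each simulated $G$-step is encoded by a block of $F$-cells that collectively survive the firings through scheduling redundancy, with simulation data living in ``shielded'' sub-regions of each block and firings only affecting ``exposed'' cells. This carries the intrinsic universality of $G$ over to $F$ via the simulation notion defined in the preliminaries.

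Third, the analysis of $L(\Sigma_n(F))$. For fixed $n$, the restriction of the firing-squad layer to a column of width $n$ forms a regular (in fact sofic) language $R_n$, because the firing-squad machinery has bounded local state and the firings seen inside the column depend on only a bounded strip of initial data outside. Conditionally on a trajectory in $R_n$, the $Q_G$-component of the column is reset to a default pattern at every firing, so between two consecutive firings the $Q_G$-column is determined by $G$-dynamics applied for a bounded number of steps to a bounded context of cells outside $[1,n]$; there are only finitely many such contexts, hence the conditional $Q_G$-column language is itself regular. The product yields a regular $L(\Sigma_n(F))$, and every ingredient (the firing-squad automaton on the column, the inter-firing delay bound, the enumeration of external contexts) is explicitly constructible from $n$ and the local rule of $F$, giving the required effective regularity. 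The BDK-non-universality then follows: the model-checking problem restricted to regular languages computable from $n$ reduces to emptiness of intersection of two effectively given regular languages, which is decidable.

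The main obstacle I expect is the delicate balance between (a) making firings frequent and guaranteed enough from every initial configuration so that the column factor is genuinely regular rather than merely co-r.e., and (b) ensuring the rescaled simulation of $G$ is robust enough against these periodic firings that intrinsic universality is preserved. Kari's firing-squad trick recalled in Figure~\ref{fig:firingsquad} is precisely tailored to this balance; the technical heart of the argument is the explicit layered construction where firing-squad information and $G$-simulation data share the alphabet only through a carefully controlled interface, so that the effective regularity of columns survives the complexity of embedded universal computations.
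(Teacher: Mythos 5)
There is a genuine gap, and it sits at the heart of your construction: the two requirements you impose on the firing-squad layer are mutually exclusive. To get regularity of $L(\Sigma_n(F))$ you demand that from \emph{every} initial configuration the $Q_G$-component of every cell of a width-$n$ column is overwritten within a delay bounded in $n$ and independent of the surrounding configuration; to keep intrinsic universality you then exempt ``shielded'' sub-regions from these resets. But intrinsic universality forces the simulating structures (hence the shielded regions) to exist at all scales and to persist for arbitrarily many steps, so entire width-$n$ columns can lie inside shielded regions and are never reset; for those columns your regularity argument (``between two consecutive firings the column is determined by boundedly many $G$-steps on a bounded outside context'') says nothing, and the column language again contains the unbounded computations of a universal CA. Conversely, if the resets really were unconditional and uniform, the $Q_G$-layer would be wiped everywhere within bounded time and no unbounded simulation could survive, contradicting intrinsic universality. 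A second unjustified step is the claim that the firing-squad layer's own column language $R_n$ is regular ``because the machinery has bounded local state and firings depend only on a bounded strip of initial data outside'': every CA has bounded local state, yet column factors are in general not sofic, and a firing squad is precisely a device whose firing time depends on arbitrarily distant information -- Figure~\ref{fig:firingsquad} is used exactly because the synchronous apparition of $\gamma$ can be triggered arbitrarily late by arbitrarily large structures. So neither the regularity of the auxiliary layer nor the regularity of the $Q_G$-layer is established.

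Note also that the construction behind the cited result goes in the opposite direction from yours. As with Kari's trick for limit sets, the column factors are simplified by \emph{filling them up}: one enlarges the alphabet and adds a component so that the set of possible width-$n$ columns becomes a simple (regular, uniformly computable in $n$) set because essentially anything suitably formatted can occur, while the universal computation is left intact on the configurations that carry it. Your proposal instead tries to make columns simple by \emph{constraining} the dynamics (periodic erasure), which is exactly what cannot coexist with intrinsic universality; the decidability of the model-checking problem from effectively computable regular column languages (your last step) is fine, but it rests on the part of the argument that is not established.
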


\bibliography{refs}
\bibliographystyle{plain}

\end{document}